\documentclass[11pt,reqno]{amsart}%
\usepackage[latin1]{inputenc}
\usepackage{mathrsfs}
\usepackage{amsmath}
\usepackage{amssymb}
\usepackage{amsthm}
\usepackage{amsfonts}
\usepackage{amstext}
\usepackage{amsopn}
\usepackage{amsxtra}
\usepackage{mathrsfs}
\usepackage{graphicx}
\usepackage{cancel}

\newtheorem{theorem}{Theorem}
\newtheorem{proposition}[theorem]{Proposition}
\newtheorem{lemma}[theorem]{Lemma}

\newcommand\R{{\ensuremath {\mathbb R} }}

\renewcommand\phi{\varphi}

\newcommand{\gF}{\mathfrak{F}}

\newcommand{\cH}{\mathcal{H}}

\newcommand{\eps}{\epsilon}

\newcommand{\bx}{\mathbf{x}}

\newcommand{\by}{\mathbf{y}}

\newcommand{\bp}{\mathbf{p}}
\newcommand{\bP}{\mathbf{P}}
\newcommand{\bq}{\mathbf{q}}
\newcommand{\bk}{\mathbf{k}}
\newcommand{\bK}{\mathbf{K}}

\renewcommand{\epsilon}{\varepsilon}

\renewcommand{\geq}{\geqslant}
\renewcommand{\leq}{\leqslant}

\renewcommand{\tilde}{\widetilde}

\newcommand{\dk}{{\rm d}\bk}

\newcommand{\ud}{\mathrm{d}}
\newcommand{\ue}{\mathrm{e}}
\newcommand{\ui}{\mathrm{i}}

\newcommand{\ren}{\mathrm{ren}}

\newcommand{\uppar}[1]{\ensuremath{^{(#1)}}}

\date{\today}

\begin{document}

\title{The Renormalised Bogoliubov-Fröhlich Hamiltonian}

\author{Jonas Lampart}
\email{jonas.lampart@u-bourgogne.fr}
\address{CNRS \& Laboratoire Interdisciplinaire Carnot de Bourgogne, UMR 6303 CNRS \& Universit\'{e} de Bourgogne Franche-Comt\'{e}, 9 Av. A. Savary, 21078 Dijon CEDEX, France}

\begin{abstract}
The Bogoliubov-Fröhlich Hamiltonian models the interaction of an impurity with the excitations of a Bose-Einstein condensate. 
It has been observed that the dependence of the ground state energy on the ultraviolet cutoff differs significantly from what would be expected from similar well-known models.
We give a detailed explanation of this UV behaviour, and provide an explicit representation of the renormalised Hamiltonian.
\end{abstract}

\maketitle

\section{The Bogoliubov-Fröhlich Model}

\subsection{Introduction}

The Bogoliubov-Fröhlich Hamiltonian models the dynamics of impurities immersed in a Bose-Einstein Condensate. The interaction between the impurities and the bosons can lead to the formation of polarons.
Such effects are well-known from similar models, such as H. Fröhlich's  paradigmatic model of electrons interacting with optical phonons in  a crystal~\cite{Hfrohlich1954}, and have been studied in various contexts, in both physics and mathematics.

In a BEC, the excitations of particles out of the joint condensate wavefunction can be described by a quantum field, similar to phonons.
In the Bogoliubov approximation, the dispersion of the excitations is related to Bogoliubov's famous formula for the excitation spectrum that helps explain superfluidity~\cite{bogoliubov1947}.
The interaction with the impurities is essentially a contact interaction, due to the dilute nature of the condensate gas. 
If the interaction is approximated by a linear coupling of impurities to the excitation field, one obtains the 
Bogoliubov-Fröhlich Hamiltonian, whose form closely resembles the Hamiltonian used by Fröhlich.
However, the contact interaction gives rise  to much stronger ultraviolet singularities. In this article we will show how these can be addressed.

Polarons in a BEC have recently attracted considerable attention, and a variety of theoretical methods  has been employed in their study~\cite{ardila2018, cucchietti2006, tempere2009, casteels2011,Levinson2015, Vlietinck2015, christensen2015, grusdt2015, grusdt2016, grusdt2016d, grusdt2016, grusdt2016f, grusdt2017,   Kain2016,  ichmoukhamedov2019, drescher2019, mistakidis2019}. 
Comparing numerical results, Grusdt~\cite{grusdt2016f} found that the system is well described by 
a Bogoliubov-Fröhlich Hamiltonian when the interaction is either attractive or not too strong.
Bose condensates with mobile impurities have recently been realised in experiments~\cite{Catani2012, Joergensen2016, Hu2016, camargo2018}, providing first tests of these theoretical predictions.


Concerning the ultraviolet behaviour, it was observed in~\cite{grusdt2015} that the ground state energy displays a divergence of the form $e_1 \Lambda + e_2 \log \Lambda$, where $\Lambda$ is the UV-cutoff. The linearly divergent term was expected, and the mechanism behind it is well understood from other models (see~\ref{sect:UV Polaron} for a detailed discussion). The logarithmic divergence is specific to the model and its origin is more subtle.

The goal of this article is to explain this logarithmic divergence in detail and give an explicit non-perturbative description of the 
renormalised Bogoliubov-Fröhlich Hamiltonian. This operator can be obtained by subtracting suitable numbers from the 
Hamiltonian with UV-cutoff and taking the limit $\Lambda\to \infty$. Such a limiting procedure often fails to specify 
the renormalised Hamiltonian in an explicit way. In practice, one is thus forced to 
work with the cutoff model, taking a sequence of larger and larger values for $\Lambda$ until one observes convergence of the relevant quantities.
However, we are able to present a formula for the renormalised operator, Eq.~\eqref{eq:Hren}, that can be studied directly.
This operator is self-adjoint, and thus generates a unitary time-evolution, and its spectrum is bounded from below, as we will show in Section~\ref{sect:Hren}.
We also characterise the domain of definition of the operator, consisting of those vectors $\Psi$ in the Hilbert space for which $\| H_\ren \Psi\|^2=\langle \Psi| H_\ren^2 \Psi \rangle <\infty$.
The explicit description of $H_\ren$ will open up new possibilities of studying aspects of the model analytically.
The renormalised Hamiltonian specifies the model without UV-cutoff in a non-perturbative way. 
Perturbation theory can then be applied to this model without having to deal with divergences.
Furthermore, in Section~\ref{sect:pert} we present an expansion of $H_\ren$ in powers of the coupling constant $g$ that could prove useful for both analytical and numerical approaches.

The characterisation of $H_\ren$ and its domain can be understood in the position representation using 
interior-boundary-conditions. These boundary conditions for quantum field theories were proposed in~\cite{TeTu15,TeTu16} 
(see there for additional references). They resemble the well-known Bethe-Peierls conditions for contact 
pseudo-potentials, but in a context where the particle number is not conserved.
Reflecting the creation and annihilation of excitation-particles on contact, they relate the singular behaviour of $\Psi$ near collision configurations to the values of $\Psi$ with fewer bosons. We explain how this relates to the Bogoliubov-Fröhlich model in Section~\ref{sect:IBC}.

\subsection{The Hamiltonian}

In this section we introduce the relevant objects and notation, for a detailed introduction of the Bogoliubov-Fröhlich model we refer to~\cite{grusdt2016c}. 

For the most part, we will restrict our discussion to the model with a single impurity, in order to keep the notation simple. We comment on the case of multiple impurities in Section~\ref{sect:mult}.
The Hamiltonian (with ultraviolet cutoff $\Lambda$) is given by
\begin{equation}\label{eq:H_L}
 H_\Lambda = H_0 + g \Phi_\Lambda(\bx),
\end{equation}
where
\begin{equation}
 H_0=\frac{\bp^2}{2M} + \ud \Gamma(\omega)= \frac{\bp^2}{2M} + \int \dk \omega(k)a^\dagger_\bk a_\bk
\end{equation}
is the sum of the kinetic energies of impurity and bosons, and the cutoff interaction is given by
\begin{equation}
 \Phi_\Lambda(\bx) = \int  \dk v_\Lambda(k) \ue^{\ui \bk\bx}\left( a_\bk + a^\dagger_{-\bk}\right).
\end{equation}
The dispersion relation of the excitations is given by
\begin{equation}
 \omega(k)=c k \sqrt{1+(k\xi)^2/2}
\end{equation}
and the cutoff form factor is
\begin{equation}
 v_\Lambda(k)=\left\lbrace
 \begin{aligned}
&\frac{1}{(2\pi)^{3/2}} \left(\frac{(\xi k)^2}{2+(\xi k)^2} \right)^{1/4} \qquad &k\leq \Lambda\\
&0                  &k>\Lambda.
\end{aligned}
\right.
\end{equation}
Here, $g$ is a coupling constant, $M$ is the mass of the impurity, $\bp$ its momentum, and 
and $\bx=-\ui \nabla_{\bp}$ its position operator. The constants the $\xi$ and $c$ are the coherence length and the speed of sound in the BEC, respectively.
The constant $g$ is an effective coupling constant that may be related to the physical parameters of the BEC via 
\begin{equation}
 g= \frac{2 \pi \sqrt \rho a}{\mu},
\end{equation}
where $\rho$ is the condensate density, $a$ the scattering length of the impurity-boson interaction and $\mu$ the reduced mass (see~\cite[Sect.3]{grusdt2016c} for a discussion).

For fixed boson number $n$, we will denote the function
\begin{equation}\label{eq:H_0 fct}
 H_0(\bp;\bk_1,\dots,\bk_n)=H_0(\bp;\bK)=\frac{\bp^2}{2M} + \sum_{i=1}^n \omega(k_i),
\end{equation}
where the semicolon delimits the momentum variables of the impurity (or, in Section~\ref{sect:mult}, the impurities) and the bosons.

For a finite cutoff, the expression~\eqref{eq:H_L} defines a self-adjoint operator on the domain $D(H_\Lambda)=D(H_0)$ of vectors $\Psi$ in the Hilbert space $\cH=\cH_\mathrm{I}\otimes \gF$, the tensor product of the space $\cH_\mathrm{I}=L^2(\R^3)$ for the impurity and the bosonic Fock space $\gF$ of the excitations, for which $H_0\Psi$ has finite norm.

\begin{proposition}
 For $\Lambda<\infty$, $H_\Lambda$ is self-adjoint on the domain 
 \begin{equation}
  D(H_\Lambda)=D(H_0)=\left\{ \Psi \in \cH \vert H_0\Psi \in \cH \right\}
 \end{equation}
 and bounded from below.
\end{proposition}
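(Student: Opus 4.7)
The plan is to invoke the Kato--Rellich theorem. At fixed finite cutoff $\Lambda$, I will show that the interaction $g\Phi_\Lambda(\bx)$ is $H_0$-bounded with arbitrarily small relative bound on the domain $D(H_0)$; self-adjointness on $D(H_0)$ and semi-boundedness then follow immediately.

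The proof rests on two quantitative observations about the cutoff form factor. First, since $v_\Lambda$ is bounded and supported in $\{k\leq\Lambda\}$, it lies in $L^2(\R^3)$; set $\beta:=\|v_\Lambda\|_2$. Second, near the origin one has $v_\Lambda(k)\sim (2\pi)^{-3/2}(\xi k/\sqrt{2})^{1/2}$ while $\omega(k)\sim ck$, so $v_\Lambda^2/\omega$ is bounded near $0$ and, being compactly supported, integrable; set $\alpha:=\|v_\Lambda/\sqrt{\omega}\|_2<\infty$. With these two finite constants in hand, the familiar number-operator bounds for annihilation and creation operators, applied fibrewise in $\bk$ to the operator-valued form factor $v_\Lambda(k)\ue^{\pm\ui\bk\bx}$ (which is an isometry on $\cH_\mathrm{I}$ for every $\bk$), yield
\begin{equation*}
 \|\Phi_\Lambda(\bx)\Psi\| \leq 2\alpha\, \|\ud\Gamma(\omega)^{1/2}\Psi\| + \beta\,\|\Psi\|
\end{equation*}
for every $\Psi\in D(H_0)$. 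Combined with $\ud\Gamma(\omega)\leq H_0$ in the form sense, and with the elementary interpolation $\|H_0^{1/2}\Psi\|^2\leq \varepsilon\|H_0\Psi\|^2+(4\varepsilon)^{-1}\|\Psi\|^2$ valid for every $\varepsilon>0$, this gives a relative $H_0$-bound that can be made arbitrarily small. Kato--Rellich then delivers self-adjointness of $H_\Lambda$ on $D(H_0)$ together with a lower bound.

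The only step that is not entirely mechanical is the justification of the number-operator bounds with an \emph{operator-valued} kernel, i.e.~with the impurity-dependence carried by $\ue^{\pm\ui\bk\bx}$. I would handle this either through the direct-integral decomposition $\cH\cong\int^\oplus_{\R^3}\gF\,\dx$, in which $\ue^{\ui\bk\bx}$ reduces to a scalar phase and the scalar bounds apply pointwise in $\bx$, or by estimating the $n$-boson kernels of $\Phi_\Lambda(\bx)\Psi$ directly by Cauchy--Schwarz in the $\bk$-variable. Either way the unitarity of $\ue^{\pm\ui\bk\bx}$ ensures one recovers exactly the scalar estimates associated with $v_\Lambda$. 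No genuine obstacle is expected: the whole statement is essentially the Kato--Rellich treatment familiar from the Fr\"ohlich polaron at finite ultraviolet cutoff, and the singular behaviour of the model (the subject of the rest of the paper) only appears when $\Lambda\to\infty$.
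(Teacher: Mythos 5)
Your proposal is correct and follows essentially the same route as the paper: both establish that the interaction is bounded relative to $H_0^{1/2}$ (equivalently $\ud\Gamma(\omega)^{1/2}$), upgrade this to an infinitesimal $H_0$-bound via $H_0^{1/2}\leq \eps H_0 + (4\eps)^{-1}$, and conclude with Kato--Rellich. The only difference is that you prove the relative bound by hand from the standard creation/annihilation estimates, using $v_\Lambda\in L^2$ and $v_\Lambda/\sqrt{\omega}\in L^2$ for finite $\Lambda$ together with the unitarity of $\ue^{\ui\bk\bx}$, whereas the paper simply cites \cite[Prop.3.6]{Der2003} for this step.
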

\begin{proof}
 Using~\cite[Prop.3.6]{Der2003}, we find that $H_\Lambda$ is bounded relative to $H_0^{1/2}$. Since $H_0^{1/2}\leq \eps H_0 + (4\eps)^{-1}$, the claim then follows from the Kato-Rellich Theorem~\cite[Thm.X.12]{ReSi2}.  
\end{proof}

The model is of course translation-invariant, so total momentum is conserved. One can represent the Hamiltonian as a function of the total momentum $\bP$ by applying the unitary transformation $U=\ue^{\ui \bx \ud \Gamma(\bk)}$, which maps the impurity momentum $\bp$ into the total momentum $\bP=\bp+ \ud \Gamma(\bk)$.
The transformed wavefunction $U\Psi$ can be considered as a function of the total momentum taking values in $\gF$. Then
\begin{equation}
 \langle \Phi | H_\Lambda \Psi \rangle = \int \ud \bP \langle (U\Phi)(\bP)| H_\Lambda (\bP)(U\Psi)(\bP)\rangle_\gF,
\end{equation}
with the transformed Hamiltonian at total momentum $\bP$
\begin{equation}
 H_\Lambda (\bP)= H_0(\bP) + g\Phi_\Lambda,
\end{equation}
where
\begin{equation}
 H_0(\bP)= \frac1{2M}\left(\bP-\ud \Gamma(\bk)\right)^2  + \ud\Gamma(\omega)
\end{equation}
and $\Phi_\Lambda=\Phi_\Lambda(0)$ (this notation should not be confused with the function~\eqref{eq:H_0 fct}).

We will mostly study the translation-invariant model without fixing total momentum. This is convenient for the discussion of the UV-behavior, since one can simply fix the total momentum after renormalisation.

\section{Renormalisation of the Bogoliubov-Fröhlich Hamiltonian}

Having defined the Hamiltonian, we may immediately observe two instances of the UV-problem. First of all, the norm
\begin{equation}
 \|H_\Lambda(\bP)|\emptyset\rangle_{\bP}\|^2= \frac{P^2}{2M}+g^2 \int \ud \bk |v_\Lambda(k)|^2\sim \frac{g^2}{6 \pi^2}\Lambda^3,
\end{equation}
diverges, so the vacuum (at total momentum $\bP$) cannot be in the domain of definition of $H_\infty(\bP)$ (if the latter exists).
Secondly, the perturbative expression for the ground state energy,
\begin{equation}\label{eq:E_0 pert}
 \frac{P^2}{2M} - g^2\int \ud \bk\frac{|v_\Lambda(k)|^2}{\tfrac{1}{2M}(\bP-\bk)^2+\omega(k)} + g^3(\dots)
\end{equation}
is also divergent, indicating that the ground state energy of $H_\infty(\bP)$ would have to be minus infinity (though one should not infer any quantitative predictions from this formula, as the non-interacting ground state does not have a mass-gap). 

We sketch in Section~\ref{sect:UV Polaron} how these problems can be addressed for general Hamiltonians of Fröhlich type. 
In our case, there will be an additional logarithmically divergent expression, and the procedure for constructing $H_\ren$ will be given in detail in Sect.~\ref{sect:Hren}.
Our main result is
\begin{theorem}\label{thm:ren}
 There exists a self-adjoint and lower bounded operator \\$(H_\ren, D(H_\ren))$, and numbers $E_\Lambda$, $\Lambda>0$, such that 
 \begin{equation*}
  H_\Lambda - E_\Lambda \to H_\ren.
 \end{equation*}
\end{theorem}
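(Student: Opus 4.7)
The plan is to diagonalise the UV singularities through a Gross-type dressing, identify the counterterms $E_\Lambda$ that absorb both the linear and a residual logarithmic divergence, and produce $H_\ren$ from the KLMN theorem applied to the limiting sesquilinear form.

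I would work fiber-wise on the translation-reduced operator $H_\Lambda(\bP)$ and conjugate by a Weyl operator $W_\Lambda=\exp(a^\dagger(\alpha_\Lambda)-a(\alpha_\Lambda))$ whose coherent kernel is of the form $\alpha_\Lambda(\bk)\sim -g\,v_\Lambda(k)/\omega(k)$, chosen so that $a(\alpha_\Lambda)$ cancels the leading annihilation part of $g\Phi_\Lambda$. This standard transformation produces a self-energy constant $c_\Lambda^{(1)}\sim e_1\Lambda$, matching the linearly divergent contribution visible in~\eqref{eq:E_0 pert}; a symmetric residual interaction whose form factor is suppressed by an extra $\omega(k)^{-1}$; and, specifically for this model, two-boson terms from $[\bp^2/(2M),\alpha_\Lambda]$ with form factors of the type $v_\Lambda(k)v_\Lambda(k')/\omega(k)$. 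The diagonal self-contraction of these last terms is only logarithmically divergent, and I would absorb it into a second constant $c_\Lambda^{(2)}\sim e_2\log\Lambda$.

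Setting $E_\Lambda=c_\Lambda^{(1)}+c_\Lambda^{(2)}$, the dressed operator $W_\Lambda^{-1}(H_\Lambda-E_\Lambda)W_\Lambda$ then reduces to $H_0$ plus a form that is $H_0$-form bounded uniformly in $\Lambda$ and converges as $\Lambda\to\infty$. The key tools are the Fock-space bound $\|a(f)\Psi\|\le\|f/\sqrt\omega\|\,\|\ud\Gamma(\omega)^{1/2}\Psi\|$ and pull-through identities that reduce every term to an integral of the type $\int|v_\Lambda(k)|^2\,K(\bk)\,\dk$ with a kernel $K$ that is integrable uniformly in $\Lambda$ after the two subtractions. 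The KLMN theorem then yields a self-adjoint, lower bounded operator $\widetilde H_\ren$ as the form limit, and a second-resolvent argument gives resolvent convergence of the dressed operators to $\widetilde H_\ren$.

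The final step is to undo the dressing. Since the kernel $\alpha_\Lambda$ is not square-integrable in the limit, $W_\Lambda$ does not converge on $\cH$; the statement $H_\Lambda-E_\Lambda\to H_\ren$ must therefore be formulated at the level of resolvents. I would prove that $W_\Lambda(\widetilde H_\ren+\lambda)^{-1}W_\Lambda^{-1}$ has a strong operator limit $R_\lambda$ for some (equivalently, any) large $\lambda$, verify that $R_\lambda$ is symmetric with dense range, and define $H_\ren=R_\lambda^{-1}-\lambda$. The main obstacle I foresee is establishing the uniform $H_0$-form bound on the residual interaction after the second dressing: that residual is quadratic in creation and annihilation operators with a non-local kernel, and controlling it requires delicate commutator estimates on Fock space. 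This is the step that the interior-boundary-conditions framework of Section~\ref{sect:IBC} is designed to organise, by replacing the singular form factor with a boundary condition relating the values of wavefunctions in adjacent $n$-boson sectors at collision configurations.
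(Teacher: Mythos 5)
Your proposal follows the Nelson/Gross route (coherent dressing, normal ordering, KLMN), which is precisely the route the paper argues (Sect.~\ref{sect:UV Polaron}, items (3)--(4)) cannot work here, and the failure occurs at your central step: the claim that after one Weyl dressing and subtraction of two constants the residual interaction is $H_0$-form bounded \emph{uniformly in} $\Lambda$. With $\alpha_\Lambda\sim -g v_\Lambda/\omega$ (or with the recoil term included), the dressing of the impurity kinetic energy produces two-boson terms with kernel proportional to $\bk\alpha_\Lambda(\bk)\cdot\bk'\alpha_\Lambda(\bk')$, and $|\bk\alpha_\Lambda(\bk)|\sim k^{-1}$ at large $k$. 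The quantity that controls the form bound of the $a^\dagger a^\dagger$ piece relative to $H_0$,
\begin{equation*}
\int \dk\,\dk'\ \frac{|\bk\alpha_\Lambda(\bk)|^2\,|\bk'\alpha_\Lambda(\bk')|^2}{\omega(k)+\omega(k')+\tfrac{1}{2M}(\bk+\bk')^2}\ \sim\ \log\Lambda,
\end{equation*}
diverges, and this divergence sits in the terms coupling the $n$- and $(n{+}2)$-boson sectors. Subtracting a c-number can never repair a uniform KLMN bound for an off-diagonal form; the cancellation against $E_\Lambda\uppar{2}$ is only visible in second-order perturbation theory, not at the level of the quadratic form of the once-dressed operator. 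Relatedly, your identification of $c^{(2)}_\Lambda$ is off: the ``diagonal self-contraction'' (Wick constant) of the quadratic terms is $\propto\|\bk\alpha_\Lambda\|^2\sim\Lambda$ and, with the standard choice of $\alpha_\Lambda$, is already part of the order-$g^2$ linear counterterm. The genuine $\log\Lambda$ divergence is of order $g^4$ and arises at second order in the residual two-boson interaction; in the paper it is $E_\Lambda\uppar{2}=\langle\emptyset|G_{0,\Lambda}^\dagger\Sigma\uppar{1}_\Lambda G_{0,\Lambda}\emptyset\rangle_{\bP=0}$, Eq.~\eqref{eq:E(2)}. (A minor misdiagnosis in your last paragraph: here $\alpha_\infty\in L^2$, since $v_\infty$ is bounded and $\omega(k)\sim c\xi k^2/\sqrt2$, so undressing is not the obstacle.)

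What is missing is a second, non-coherent step that makes the $g^4$ cancellation explicit before any limit is taken. The paper does this without any unitary dressing: it factorises $H_\Lambda=(1-G_{\sigma,\Lambda}^\dagger)(H_0+\sigma)(1-G_{\sigma,\Lambda})-\sigma+E_\Lambda\uppar{1}+\Sigma\uppar{1}_\Lambda$ with $G_{\sigma,\Lambda}=-g(H_0+\sigma)^{-1}a^\dagger(v_\Lambda(\bx))$, then repeats the construction with $H_0$ replaced by $H_0+\Sigma\uppar{1}_\Lambda$ (the operator $\tilde G_{\sigma,\Lambda}$), which is exactly the move that converts the off-diagonal log-divergent piece plus $E_\Lambda\uppar{2}$ into the bounded-below operator $\Sigma\uppar{2}$ of Lemma~\ref{lem:S}. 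The limit is then taken at the operator level: $H_\ren=(1-\tilde G_\sigma^\dagger)(H_0+\Sigma\uppar{1}+\sigma)(1-\tilde G_\sigma)+\Sigma\uppar{2}-\sigma$ on the explicit domain~\eqref{eq:Dren}, self-adjointness via Kato--Rellich (Proposition~\ref{prop:Hren}), and norm-resolvent convergence directly (Proposition~\ref{prop:conv}). If you wish to salvage your strategy, you would have to supplement the Weyl dressing by a further transformation acting on the two-boson terms (treating the order-$g^2$ self-energy on the same footing as $H_0$, as the paper does, or a Bogoliubov-type quadratic dressing), and prove uniform bounds for that step; the two constants alone do not suffice.
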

The content of this theorem is proved in Propositions~\ref{prop:Hren},~\ref{prop:conv} below (see there for the precise statement on convergence). Importantly, Theorem~\ref{thm:ren} proves that the ground state energy of $H_\Lambda-E_\Lambda$ remains finite, since $H_\ren$ is bounded below and the spectrum converges. It also shows that the unitary propagators converge,
\begin{equation}
 \ue^{\ui E_\Lambda t}\ue^{-\ui H_\Lambda t}\to \ue^{-\ui H_\ren t},
\end{equation}
in norm. For a detailed discussion of the notion of resolvent convergence see~\cite[Sect.VIII.7]{ReSi1}.

We emphasise that the objects whose existence is stated here are in fact explicit. 
The Hamiltonian $H_\ren$ and its domain $D(H_\ren)$ are given by~\eqref{eq:Hren} and~\eqref{eq:Dren}, respectively. Its interpretation in terms of interior-boundary conditions is discussed in Sect.~\ref{sect:IBC}. The numbers $E_\Lambda=E_\Lambda\uppar{1}+E_\Lambda\uppar{2}$, with $E_\Lambda\uppar{1}\sim e_1\Lambda$, $E_\Lambda\uppar{2}\sim e_2\log\Lambda$,  depending on the parameters $M,g,\xi,c$ of the model, are given in~\eqref{eq:E_0 pert det} and~\eqref{eq:E(2)}. These agree with formal perturbation theory up to order $g^4$, so our result shows that it is sufficient to subtract these two perturbative terms to fully renormalise the Hamiltonian. 

The renormalised operators at total momentum $\bP$, $H_\ren(\bP)$ can simply be obtained by applying the unitary transformation $U$ to the renormalised operator $H_\ren$ and then fixing $\bP$. in the expicit representation of $H_\ren$ this amounts to simply replacing $\bp$ by $\bP- \sum {\bk_j}$ (where the sum is taken over the appropriate range).  
Since the renormalisation of $H_\Lambda$ is achieved by simply subtracting the constant $E_\Lambda$ the operators $H_\Lambda(\bP)$ are renormalised by subtracting \emph{the same constant}. In particular, differences of ground state energies $E_\Lambda(\bP)$ of $H_\Lambda(\bP)$ at $\bP\neq \bP'$ will stay finite as $\Lambda\to \infty$. This suggests that the effective mass of the polaron, i.e.,
\begin{equation}
 M_{\mathrm{eff},\Lambda}= \left(\frac{\partial^2 E_\Lambda(\bP)}{\partial P_1^2} \right)^{-1}
\end{equation}
will also have a finite limit as $\Lambda \to \infty$. However, our results do not imply this since we do not show that the second derivative of $E_\infty(\bP)$ is finite and does not vanish.

\subsection{UV-Behaviour of Fröhlich-Type Models}\label{sect:UV Polaron}
\leavevmode\\
Hamiltonians of the Fröhlich type have been studied for various different form factors $v(k)$ and dispersion relations $\omega(k)$. With some restrictions, they provide a class of models where the UV-problem can be solved rigorously on the level of the Hamiltonian. Let us briefly review the picture of their UV-behaviour that emerges and cite some key results (assuming that $\omega\geq m$ to focus on the UV problem):

\begin{enumerate}
 \item If $\int \ud \bk |v_\infty(k)|^2<\infty$ the model is UV-regular; $H_\infty$ is well defined on the domain $D(H_\infty)=D(H_0)$.
 \item If $\int \ud \bk |v_\infty(k)|^2=\infty$ but $\int \ud \bk \frac{|v_\infty(k)|^2}{k^2+1}<\infty$, the model remains UV regular, $H_\infty$ is well defined, but on a domain $D(H_\infty)$ different from $D(H_0)$ (in fact, $D(H_\infty)\cap D(H_0)=\{0\}$). This is the case for the original optical Fröhlich model, where $v(k)\propto 1/k$, $\omega\equiv\mathrm{const.}$ (see~\cite{GrWu16, LaSch19} for proofs and additional references). %
 \item If $\int \ud \bk \frac{|v_\infty(k)|^2}{k^2+1}=\infty$, the model is not UV-regular and the ground state energy of $H_\Lambda$ tends to negative infinity. However, if  
 \begin{equation}
  \int \ud \bk \frac{|v_\infty(k)|^2}{(k^2+1)^{2}}<\infty
 \end{equation}
 one can hope to define a renormalised Hamiltonian $H_\ren$ by simply renormalising the ground state energy.
 This was done rigorously by Nelson~\cite{nelson1964} for the case of a relativistic scalar field, i.e. $\omega(k)=\sqrt{k^2+m^2}$, $v(k)\propto \omega(k)^{-1/2}$, and later extended to a larger class of models (see e.g.~\cite{GrWu18, LaSch19}). For the class of models treated in those articles, it is sufficient to subtract the value of~\eqref{eq:E_0 pert} at $\bP=0$ from the Hamiltonian $H_\Lambda$ to obtain convergence (in the sense of resolvents)
 \begin{equation}
  H_\Lambda -E_\Lambda\uppar{1}(0) \rightarrow H_\ren.
 \end{equation}
 \item The Bogoliubov-Fröhlich Hamiltonian does not fall into the class treated in~\cite{nelson1964, GrWu18, LaSch19}. In view of the different behaviour of its ground state energy, it cannot be treated by the same methods. 
 In Sect.~\ref{sect:Hren} below we will construct the renormalised Hamiltonian for this case and prove convergence by a new method. This method was developed in~\cite{La19} for the case $v(k)\equiv\mathrm{const.}$ and $\omega(k)=k^2+1$ corresponding to a local interaction\footnote{In our model, the interaction has become non-local because of the Bogoliubov approximation}, and generalises the approach of~\cite{LaSch19}. 
 
 We believe our method can be extended to models satisfying $\int \ud \bk \frac{|v_\infty(k)|^2}{(k^2+1)^{2s}}<\infty$, for some $s<1$. For such a model we expect the divergence to show the behaviour
 \begin{equation}
  E_\Lambda \sim e_1 \Lambda^{\gamma_1} + \cdots + e_{j} \Lambda^{\gamma_j} + e_{j+1} \log \Lambda
 \end{equation}
for some integer $j=j(s)$, and real numbers $\gamma_1>\cdots>\gamma_j>0$ and $e_j=O(g^{2\ell})$. For arguments in support of 
this picture see Sect.~\ref{sect:pert}.

\end{enumerate}

\subsection{Construction of $H_\ren$}\label{sect:Hren}

Here we explain how to obtain the renormalised Hamiltonian $H_\ren$ and prove Theorem~\ref{thm:ren}. For some of the technical details we will refer to~\cite{La19}, where a similar result is proved for the model with $v_\infty(k)\equiv\mathrm{const.}$ and $\omega(k)=k^2+1$, which has the same UV-behaviour.

Let us start by describing the method for determining $H_\ren$. The key is to find a way of incorporating the diverging numbers $E_\Lambda$ into the full Hamiltonian that makes the expected cancellations explicit. 
The second-order contribution to the ground state energy can be written as
\begin{align}
 E_\Lambda\uppar{1}&= -g^2 \langle \emptyset | \Phi_\Lambda H_0(0)^{-1}  \Phi_\Lambda \emptyset \rangle \notag \\
 &= -g^2  \langle \emptyset | a(v_\Lambda) H_0(0)^{-1}a^\dagger(v_\Lambda)   \emptyset \rangle, \label{eq:E_0 pert det}\\
 &\sim-\frac{g^2\mu }{\pi^2}\Lambda
 ,\label{eq:E(1)}
\end{align}
with $a(v_\Lambda)=\int \ud \bk v_\Lambda(k) a_\bk$, and  reduced mass $\mu=(M^{-1} + c\xi \sqrt{2})^{-1}$. Set $a(v_\Lambda(\bx))=\int \ud \bk \ue^{\ui \bk \bx}v_\Lambda(k) a_\bk$ and 
\begin{equation}
 G_{\sigma,\Lambda}=-g (H_0+\sigma)^{-1}a^\dagger(v_\Lambda(\bx))
\end{equation}
(the shift $\sigma>0$ has been added to make $H_0$ strictly positive and thus invertible) . Then we can write 
\begin{align}
 H_\Lambda =& H_0 + g\left(a^\dagger(v_\Lambda(\bx)) + a(v_\Lambda(\bx))\right)\notag\\
 =& (H_0+\sigma)(1-G_{\sigma,\Lambda}) + ga(v_\Lambda(\bx)) -\sigma \notag\\
 = &(1-G^\dagger_{\sigma,\Lambda})(H_0+\sigma)(1-G_{\sigma,\Lambda}) - \sigma \label{eq:H_Lambda trafo1}\\
 &- g^2 a(v_\Lambda(\bx))(H_0+\sigma)^{-1} a^\dagger(v_\Lambda(\bx)). \label{eq:H_L Sigma}
\end{align}
Notice that~\eqref{eq:E_0 pert det} is just the ($\bP=0$, $\sigma=0$) vacuum expectation of~\eqref{eq:H_L Sigma}. To exhibit the cancellations, we spell out how the operator~\eqref{eq:H_L Sigma}, which preserves the boson-number, acts on a wavefuntion with a fixed number $n$ of bosons. We have
\begin{align}
 &-g^2a(v_\Lambda(\bx))(H_0+\sigma)^{-1} a^\dagger(v_\Lambda(\bx))\psi\uppar{n} \label{eq:Sigma^1 Lambda}\\
 &= \frac{-g^2}{n+1} \sum_{i,j=1}^{n+1} \int \ud \bk_i \frac{v_\Lambda(k_i)v_\Lambda(k_j)\psi\uppar{n}(\bp -\bk_i+\bk_j,  \cancel{\bk}_{j})}{H_0(\bp-\bk_i; \bk_1,\dots, \bk_{n+1})},\notag
\end{align}
where $\cancel{\bk}_{j}$ indicates that the function depends on all of the $\bk_\ell$ with $ \ell\neq j$.
Note that for $i=j$, $\psi\uppar{n}$ does not depend on the  variable of integration $\bk_i$ at all, so the operator given by this expression is just mulitplication by some function of $\bp,\cancel{\bk}_i$. By symmetry, it is equal to the function for $i=n+1$. 
This function diverges as $\Lambda\to \infty$, but combining with~\eqref{eq:E_0 pert det} yields
\begin{align}
 &\Sigma\uppar{1}_{\ud,\Lambda}(\bp, \bK)
 :=-g^2\int \ud \bq \left(\frac{|v_\Lambda(q)|^2}{H_0(\bp-\bq; \bK, \bq)+\sigma} - \frac{|v_\Lambda(q)|^2}{\frac{q^2}{2M} +\omega(q)}\right),
\end{align}
where $\bK=(\bk_1,\dots,\bk_n)$.
This integral is finite also for $\Lambda= \infty$, giving a function $\Sigma\uppar{1}_\ud=\Sigma_{\ud,\infty}(\bp,\bK)$. 

The terms in~\eqref{eq:Sigma^1 Lambda} with $i\neq j$ are quite different. They act on $\psi\uppar{n}$ as integral operators, and not multiplication operators. The result of this action will be finite, even for $\Lambda=\infty$, at least if $\psi\uppar{n}$ decreases sufficiently fast for large momenta. This operator is given by
\begin{align}
 &\left(\Sigma_{\mathrm{od},\Lambda}\uppar{1}\psi\uppar{n}\right)(\bp, \bK)
 =-g^2 \sum_{j=1}^{n} \int \ud \bq \frac{v_\Lambda(q)v_\Lambda(k_j)\psi\uppar{n}(\bp -\bq+\bk_j, \cancel{\bk}_j,\bq)}{H_0(\bp-\bq; \bK,\bq)+\sigma}.
\end{align}
We set $\Sigma\uppar{1}_\Lambda=\Sigma\uppar{1}_{\ud,\Lambda}+\Sigma_{\mathrm{od},\Lambda}\uppar{1}$, and $\Sigma\uppar{1}=\Sigma\uppar{1}_\infty$.
The operator $\Sigma\uppar{1}$ is related to the interaction between the bosons and the impurity mediated by exchange of a single boson.
The strength of this interaction is estimated in the following Lemma.

\begin{lemma}\label{lem:T}
There is an $n$-independent constant $C$ such that
\begin{equation}
 \| \Sigma\uppar{1} \psi\uppar{n}\| \leq g^2 C \| (H_0+\sigma)^{1/2}\psi\uppar{n}\|.
\end{equation}
Moreover, the difference to $\Sigma_{\Lambda}\uppar{1}$ satisfies for any $\eps>0$
\begin{equation}
 \| (\Sigma\uppar{1}_\Lambda- \Sigma\uppar{1})\psi\uppar{n}\| \leq C_\Lambda \| (H_0+\sigma)^{1/2+\eps}\psi\uppar{n}\|,
\end{equation}
with $\lim_{\Lambda \to \infty}C_\Lambda=0$.
\end{lemma}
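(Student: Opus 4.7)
I split $\Sigma^{(1)} = \Sigma^{(1)}_{\rm d}+\Sigma^{(1)}_{\rm od}$ as in the text and estimate each piece separately.

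For the diagonal piece, $\Sigma^{(1)}_{\rm d}(\bp,\bK)$ is a multiplication operator, so it suffices to establish the pointwise bound $|\Sigma^{(1)}_{\rm d}(\bp,\bK)|\le Cg^2\sqrt{H_0(\bp;\bK)+\sigma}$. Writing $\tfrac{1}{A}-\tfrac{1}{B}=\tfrac{B-A}{AB}$ with $A=H_0(\bp-\bq;\bK,\bq)+\sigma$ and $B=q^2/(2M)+\omega(q)$, one has $B-A=(2\bp\cdot\bq-p^2)/(2M)-\sum_i\omega(k_i)-\sigma$. Using the asymptotics $\omega(q)\sim c\xi q^2/\sqrt 2$ and $|v_\infty(q)|^2\to(2\pi)^{-3}$ at large $|\bq|$, the integrand decays like $|\bq|^{-3}$, while near the origin it vanishes like $|\bq|^3$ because $|v_\infty(q)|^2\lesssim|\bq|$. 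A direct pointwise bookkeeping of the integral then yields the required estimate, with $C$ independent of $n$.

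For the off-diagonal piece I would use a weighted Cauchy--Schwarz (Schur-type) estimate on the $\bq$-integral. The weight $W(\bq,\bk_j,\bp,\bK)$ has to be chosen so that the integral
$$
F(\bp,\bK):=\sum_{j=1}^n\int d\bq\,\frac{|v(q)v(k_j)|^2}{W\,(H_0(\bp-\bq;\bK,\bq)+\sigma)^{2}}
$$
is bounded by a constant multiple of $H_0(\bp;\bK)+\sigma$, while the dual integral $\int d\bp\,d\bK\sum_j\int d\bq\,W\,|\psi^{(n)}(\bp-\bq+\bk_j,\cancel{\bk}_j,\bq)|^2$ reduces---after the change of variables $\bp\mapsto\bp-\bq+\bk_j$ and the swap $\bq\leftrightarrow\bk_j$, exploiting the bosonic symmetry of $\psi^{(n)}$---to $C'\|\psi^{(n)}\|^2$. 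This symmetrization is what makes the sum over $j$ collapse into a single integration and delivers the $n$-independent constant. The quantitative bound on $\Sigma^{(1)}_\Lambda-\Sigma^{(1)}$ follows by applying the same arguments with the $\bq$-integrals restricted to $|\bq|>\Lambda$: the diagonal tail is then $\lesssim\Lambda^{-\delta}(H_0+\sigma)^{1/2+\eps}$, and the Schur weight restricted to $|\bq|>\Lambda$ vanishes as $\Lambda\to\infty$. The extra factor $(H_0+\sigma)^\eps$ in the hypothesis is exactly what turns the borderline convergent integrals appearing in the first bound into integrals with a $\Lambda$-vanishing prefactor.

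\textbf{Main obstacle.} The principal challenge is the off-diagonal Schur step: because $|v_\infty|^2/\omega$ fails to be integrable at infinity, the standard $N_\tau$-type bounds on $a(v)$ do not apply and the kernel $v(q)v(k_j)/(H_0+\sigma)$ is only marginally controlled. The remedy is to distribute powers of two different resolvents and use the impurity kinetic term $(\bp-\bq)^2/(2M)$ inside $H_0$ to gain decay at large $|\bq|$, at the price of making the weight depend nontrivially on $(\bp,\bK,\bk_j)$; tracking $n$-independence through the subsequent change of variables is the other delicate point. Both are handled by the method of \cite{La19}, which I would adapt to the BF dispersion and form factor---this is precisely the step where the Bogoliubov--Fr\"ohlich model departs from the simpler Fr\"ohlich- and Nelson-type models.
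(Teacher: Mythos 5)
Your overall route is the same as the paper's: split $\Sigma\uppar{1}=\Sigma\uppar{1}_\ud+\Sigma\uppar{1}_{\mathrm{od}}$, bound the diagonal part pointwise as a multiplier, bound the off-diagonal part by a weighted Schur test in which a change of variables and the bosonic symmetry make the sum over $j$ collapse and give the $n$-independent constant (the observation of \cite{MoSe17}, carried out as in \cite[Lem.17]{La19}), and obtain the convergence statement by rerunning the estimates with the tail form factor $v_\infty-v_\Lambda$, the extra $(H_0+\sigma)^\eps$ supplying the vanishing prefactor. So there is no methodological divergence to report; the issue is with how you justify the diagonal bound.

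There is a genuine gap in that step as written. With $B-A=(2\bp\cdot\bq-p^2)/(2M)-\sum_i\omega(k_i)-\sigma$, the \emph{absolute value} of the integrand behaves at large $|\bq|$ like $|v_\infty(q)|^2\,p\,q/(AB)\sim p\,|\bq|^{-3}$, which is \emph{not} integrable in three dimensions: ``direct pointwise bookkeeping'' of this decay yields $|\Sigma\uppar{1}_{\ud,\Lambda}|\lesssim p\log\Lambda$ and does not even show that the $\Lambda=\infty$ integral is finite, let alone bounded by $C g^2(H_0(\bp;\bK)+\sigma)^{1/2}$. The integral converges only conditionally: you must exploit that the term linear in $\bp\cdot\bq$ is odd in $\bq$ while the remaining factors are even to leading order, e.g.\ by symmetrising $\bq\to-\bq$ (or subtracting the spherical average of the resolvent), which replaces the linear term by one quadratic in $\bp\cdot\bq$ divided by an extra resolvent; the effective integrand then decays like $|\bq|^{-4}$, and scaling gives the stated $(H_0+\sigma)^{1/2}$ growth --- this cancellation is exactly the ``scaling and explicit evaluation'' of \cite[Sect.6]{FiTe2012} that the paper invokes. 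Two smaller points: near $\bq=0$ the integrand is $O(1)$, not $O(|\bq|^3)$ (since $|v_\infty(q)|^2\sim|\bq|$ while $1/B\sim 1/(c|\bq|)$), which is harmless; and in your Schur step the two conditions do not compose as stated --- after Cauchy--Schwarz the factor $H_0(\bp;\bK)+\sigma$ bounding $F$ sits inside the $\bp,\bK$ integration, so the weight and change of variables must transfer it onto the argument of $\psi\uppar{n}$, i.e.\ the dual integral should be controlled by $\|(H_0+\sigma)^{1/2}\psi\uppar{n}\|^2$ rather than $\|\psi\uppar{n}\|^2$; this is precisely the bookkeeping done in \cite[Lem.17]{La19}.
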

\begin{proof}
 The estimates of the integral defining $\Sigma\uppar{1}_\ud$ follow by scaling and explicit evaluation (see~\cite[Sect.6]{FiTe2012}). The integral operator $\Sigma_\mathrm{od}\uppar{1}$ can be bounded by the Schur test. The important observation that the constant does not depend on $n$, even though the sum in $\Sigma_\mathrm{od}\uppar{1}$ contains $n$ terms, was made in~\cite{MoSe17}. A detailed proof is obtained by following the steps of~\cite[Lem.17]{La19}.
 
 The statement on convergence is obtained by applying the same reasoning with the form factor $v_\infty-v_\Lambda$ (see also~\cite{IBCrelat}).
\end{proof}

Since $\int \ud \bk \frac{|v_\infty(k)|^2}{(k^2+1)^2}<\infty$, the family of operators $G_{\sigma,\Lambda}$ also has a limit $G_{\sigma}:=G_{\sigma,\infty}$. More precisely:

\begin{lemma}\label{lem:G}
 For any $0\leq s<1/4$ there exists a constant $C$ and a family $C_\Lambda$ with $\lim_{\Lambda\to \infty} C_\Lambda =0$ such that for any $\Psi\in \cH$ we have
 \begin{equation}
  \| H_0^s G_\sigma \Psi\| \leq C \|\Psi\|, 
 \end{equation}
and
\begin{equation}
  \| H_0^s \left(G_\sigma-G_{\sigma,\Lambda}\right) \Psi\| \leq C_\Lambda \|\Psi\|.
 \end{equation}
 Moreover, there exists $\sigma_0\geq 0$ such that for all $\sigma>\sigma_0$, $1-G_\sigma$ is invertible with bounded inverse $\sum_{j=0}^\infty G_\sigma^j$.
\end{lemma}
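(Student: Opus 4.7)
The plan is to compute the action of $G_{\sigma,\Lambda}$ in the Fock decomposition and then control the resulting $(n+1)$-boson vector uniformly in $n$. Writing $G_{\sigma,\Lambda}=-g(H_0+\sigma)^{-1}a^\dagger(v_\Lambda(\bx))$, one has
\begin{equation*}
 (G_{\sigma,\Lambda}\psi)^{(n+1)}(\bp,\bk_1,\dots,\bk_{n+1})=\frac{-g}{\sqrt{n+1}\,(H_0+\sigma)}\sum_{j=1}^{n+1}v_\Lambda(k_j)\,\psi^{(n)}(\bp+\bk_j,\cancel{\bk}_j),
\end{equation*}
so that $\|H_0^s G_{\sigma,\Lambda}\psi\|^2$ splits into a sum of $(n+1)^2$ terms, which I separate into ``diagonal'' ($i=j$) and ``off-diagonal'' ($i\ne j$) contributions exactly as was done for $\Sigma\uppar{1}_\Lambda$ in Lemma~\ref{lem:T}. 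For the diagonal part, bosonic symmetrisation reduces matters to $n+1$ copies of the term $j=n+1$; after the change of variables $\bp\mapsto \bp-\bk_{n+1}$ and the pointwise bound $H_0^{2s}/(H_0+\sigma)^{2}\le (\omega(k_{n+1})+\sigma)^{2s-2}$ (valid for $s\le 1$, since $H_0+\sigma\ge \omega(k_{n+1})+\sigma$), one is left with a factor $\|\psi\|^2$ times the integral
\begin{equation*}
 g^2\int\dk\,|v_\Lambda(k)|^2(\omega(k)+\sigma)^{2s-2},
\end{equation*}
which is finite precisely when $s<1/4$, thanks to the UV behaviour $\omega(k)\sim c\xi k^2/\sqrt{2}$ and $|v_\infty(k)|^2$ bounded; the low-$k$ behaviour is harmless since $|v_\infty|^2$ and $\omega$ both vanish like $k$.

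The $n$ off-diagonal terms are the main technical obstacle, because their number grows with the boson number. They will be bounded by a Schur-test estimate applied to the integral operator with kernel proportional to $v_\Lambda(k_i)v_\Lambda(k_j)(H_0+\sigma)^{2s-2}$, following the scheme used for $\Sigma_{\mathrm{od}}\uppar{1}$ in~\cite[Lem.17]{La19} and~\cite{MoSe17}; the crucial point is to choose a weight function of the momenta so that the resulting bound is independent of $n$, and to check that the resulting Schur integrals remain finite under the same scaling considerations as for the diagonal piece (again requiring $s<1/4$). Combining the diagonal and off-diagonal estimates gives the first inequality.

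The bound on $H_0^s(G_\sigma-G_{\sigma,\Lambda})$ is obtained by applying exactly the same argument with $v_\infty-v_\Lambda=v_\infty\1_{\{k>\Lambda\}}$ in place of $v_\Lambda$; the resulting constants are then tail integrals of the form $\int_{k>\Lambda}\dk\,|v_\infty|^2(\omega+\sigma)^{2s-2}$ (plus the analogous Schur contribution), and they vanish as $\Lambda\to\infty$ by dominated convergence. For the invertibility of $1-G_\sigma$, I apply the first bound at $s=0$: both the diagonal integral $g^2\int\dk\,|v_\infty|^2(\omega+\sigma)^{-2}$ and the associated Schur contribution tend to zero as $\sigma\to\infty$, again by dominated convergence. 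Hence there exists $\sigma_0\ge 0$ such that $\|G_\sigma\|<1$ for all $\sigma>\sigma_0$, and the Neumann series $\sum_{j=0}^\infty G_\sigma^j$ converges in operator norm to the bounded inverse of $1-G_\sigma$.
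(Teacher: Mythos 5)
Your proposal is essentially the argument that lies behind the paper's proof, which consists of a single citation to~\cite[Cor.3.3]{IBCrelat}; you reconstruct that argument directly rather than quoting it. The parts you carry out are correct: the fibre-wise diagonal ($i=j$) estimate with the pointwise bound $H_0^{2s}(H_0+\sigma)^{-2}\le(\omega(k)+\sigma)^{2s-2}$ reduces matters to $g^2\int\dk\,|v_\Lambda(k)|^2(\omega(k)+\sigma)^{2s-2}$, and with $\omega(k)\sim c\xi k^2/\sqrt2$ and $|v_\infty|$ bounded this is finite exactly for $s<1/4$, which is the correct threshold; replacing $v_\Lambda$ by $v_\infty\1_{\{k>\Lambda\}}$ and using dominated convergence gives the vanishing constants $C_\Lambda$; and the Neumann-series argument for $1-G_\sigma$ once $\|G_\sigma\|<1$ for large $\sigma$ is the standard conclusion. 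The one place where all the genuine work sits is the off-diagonal part: the $n$ cross terms must be handled by a weighted Schur/Cauchy--Schwarz estimate whose constant is independent of the boson number (and which also tends to $0$ as $\sigma\to\infty$, so that the invertibility claim goes through), and you state this but delegate it to the scheme of~\cite{MoSe17} and~\cite[Lem.17]{La19} rather than executing it. That is the same level of detail the paper itself adopts -- its proof of this lemma is the bare citation above, and its proof of Lemma~\ref{lem:T} likewise defers the Schur test to~\cite[Lem.17]{La19} -- so I would not call it a gap relative to the paper's standard, but be aware that the $n$-uniform choice of weight in that step is the nontrivial content of the cited results, not a routine verification.
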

\begin{proof}
 See~\cite[Cor.3.3]{IBCrelat}.
\end{proof}

Given these facts, we could hope to take the limit $\Lambda\to\infty$ of the expression~\eqref{eq:H_L Sigma} minus $E\uppar{1}_\Lambda$ together with~\eqref{eq:H_Lambda trafo1} to obtain $H_\ren$. In fact, this works for the less singular models from~\cite{nelson1964, GrWu18}, as demonstrated in~\cite{LaSch19}.
However, this does not work for the Bogoliubov-Fröhlich Hamiltonian, as can be inferred by again looking at the perturbative ground state energy (at $\bP=0$). Consider
\begin{align}
&H_\Lambda(0) - E\uppar{1}_\Lambda+\sigma
 =(1-G^\dagger_{\sigma,\Lambda})(H_0+\sigma)(1-G_{\sigma,\Lambda})\vert_{\bP=0} + \Sigma\uppar{1}_\Lambda\vert_{\bP=0}.
\end{align}
The ground state of the leading term is just $(1- G_{\sigma,\Lambda})^{-1} |\emptyset\rangle_{\bP=0}$.
The perturbative approximation of the ground state energy is thus
\begin{equation}
 \langle \emptyset | G_{\sigma,\Lambda}^\dagger \Sigma\uppar{1}_\Lambda G_{\sigma,\Lambda} \emptyset\rangle_{\bP=0} + g^5(\dots), \label{eq:GS pert2}
\end{equation}
since $\Sigma\uppar{1}\propto g^2$ and $G_{\sigma,\Lambda}\propto g$. As we have seen in Lemma~\ref{lem:T}, $\Sigma\uppar{1}_{\ud}\vert_{\bP=0}$ is a multiplier whose growth is proportional to $\sqrt{H_0(0)}\sim \frac{k}{\sqrt{2\mu}}$ on the one-boson space. On the other hand, $G_{\sigma,\Lambda}|\emptyset\rangle_{\bP=0}=-g (H_0(0)+\sigma)^{-1} v_\Lambda$, so the growth of~\eqref{eq:GS pert2} is proportional to
\begin{equation}
 %
  \frac{g^4}{\sqrt{2\mu}} \int \ud\bk \frac{|v_\Lambda(k)|^2 k}{(\frac{k^2}{2M}+\omega(k))^2} ,
\end{equation}
which diverges like $\log \Lambda$. The off-diagonal part $\Sigma\uppar{1}_\mathrm{od}$ will contribute at the same order. We set, choosing $\sigma=0$,
\begin{equation}\label{eq:E(2)}
 E\uppar{2}_\Lambda:=\langle \emptyset | G_{0,\Lambda}^\dagger \Sigma\uppar{1}_\Lambda G_{0,\Lambda} \emptyset\rangle_{\bP=0}.
 \end{equation}
By evaluation of the corresponding integrals we find
\begin{equation}
 E_2 \sim e_2 \log \Lambda ,
\end{equation}
with
\begin{align}\label{eq:e_2}
e_2=&\frac{g^4 \mu^3}{\pi^3} \gamma\left(\frac{\mu}{M}\right), \\
\gamma(s):=& \sqrt{1-s^2} - \frac1s \arctan\left(\frac{s}{\sqrt{1-s^2}}\right), \notag
\end{align}
and $\mu$ defined as in~\eqref{eq:E(1)}. This is in excellent agreement with the numerical results provided in~\cite[Fig.5.5]{grusdt2016c}.
From~\eqref{eq:e_2} we see that the contributions from $\Sigma\uppar{1}_\ud$ and $\Sigma\uppar{1}_\mathrm{od}$ cancel in the limit of a static impurity, i.e. $\lim_{M\to\infty} e_2(M)=0$ (see~\cite{IBCpaper} for a treatment of a static model). More precisely, we have the asymptotic behaviour for large $M$ 
\begin{equation}
 e_2= - \frac{2g^4\mu^5}{3\pi^3 M^2} +O(M^{-3})
\end{equation}
(compare\footnote{The formula~\cite[Eq.(5.43)]{grusdt2016c} differs from the large-$M$ approximation of $e_2(M)$ by the factor $\pi/4$. However, for the parameter $m/M$ chosen in~\cite[Fig.5.5]{grusdt2016c} the relative discrepancy of~\cite[Eq.(5.43)]{grusdt2016c} and $e_2$ is only approx.~$2\%$}~\cite[Eq.(5.43)]{grusdt2016c}).

We can extract this divergence from $H_\Lambda$ by a procedure similar to the first step, but treating $\Sigma\uppar{1}$ on the same footing as $H_0$.
We define
\begin{equation}
\tilde G_{\sigma,\Lambda}=-g \left(H_0+\Sigma\uppar{1}_\Lambda+\sigma\right)^{-1} a^\dagger(v_\Lambda(\bx)).
\end{equation}
By a calculation analogous to~\eqref{eq:H_Lambda trafo1}, we have
\begin{align}
 &H_\Lambda= 
 (1- \tilde G_{\sigma,\Lambda}^\dagger)(H_0+\Sigma\uppar{1}_\Lambda+\sigma)(1- \tilde G_{\sigma,\Lambda})
 -\sigma\\
 &- g^2 a(v_\Lambda(\bx))(H_0+\Sigma\uppar{1}_\Lambda+\sigma)^{-1} a^\dagger(v_\Lambda(\bx))-\Sigma\uppar{1}_\Lambda.
 \label{eq:H_L Sigma 2}
\end{align}
We now expand the last line using the resolvent identity and obtain
\begin{align}
 \eqref{eq:H_L Sigma 2}=&E_\Lambda\uppar{1} + G_{\sigma,\Lambda}^\dagger \Sigma\uppar{1}_\Lambda G_{\sigma,\Lambda}\label{eq:Sigma 2 sing}\\
  &- G_{\sigma,\Lambda}^\dagger  \Sigma\uppar{1}_\Lambda (H_0+\sigma)^{-1}  \Sigma\uppar{1}_\Lambda \tilde G_{\sigma,\Lambda} \label{eq:Sigma 2 reg}.
 \end{align}
From Lemma~\ref{lem:T} and Lemma~\ref{lem:G} we see that~\eqref{eq:Sigma 2 reg} defines a bounded operator, including for $\Lambda=\infty$, and that the expression for finite $\Lambda$ converges to the final one in the operator norm. 
We thus mainly need to analyse
\begin{equation}
 \Theta\uppar{2}_\Lambda:=G_{\sigma,\Lambda}^\dagger \Sigma\uppar{1}_\Lambda G_{\sigma,\Lambda} - E_\Lambda\uppar{2}.
\end{equation}

To this end, we consider, as in Eq.~\eqref{eq:Sigma^1 Lambda} before, how this operator will act on a wavefunction $\psi\uppar{n}$ with $n$ bosons.
Starting with the contribution due to $\Sigma\uppar{1}_{\ud,\Lambda}$, we observe again that some terms give multiplication operators by a function that diverges as $\Lambda\to \infty$, while others are integral operators and well defined for $\Lambda=\infty$, if $\psi\uppar{n}$ decreases quickly enough.
Spelling these terms out, we group the divergent terms together with $\langle \emptyset | G_{0,\Lambda}^\dagger \Sigma\uppar{1}_{\ud,\Lambda}G_{0,\Lambda}\emptyset\rangle_{\bP=0}$ and set (with $\bK=(\bk_1,\dots, \bk_{n})$ as before)
\begin{align}
&\Xi_{\ud,\Lambda}(\bp,\bK) 
= g^4 \int \ud \bq |v_\Lambda(q)|^2 \left(\frac{  \Sigma\uppar{1}_{\ud,\Lambda}(\bp-\bq; \bK,\bq)}{\left(H_0(\bp-\bq; \bK,\bq)+\sigma\right)^2}- \frac{\Sigma\uppar{1}_{\ud,\Lambda}(-\bq;\bq)}{H_0(\bq;\bq)^2} \right), \\\
&\left(\Xi_{\mathrm{od},\Lambda}\psi\uppar{n}\right)(\bp,\bK)\notag\\
&= g^4\sum_{j=1}^{n}
 \int \ud \bq \frac{v_\Lambda(q)v_\Lambda(k_j) \Sigma\uppar{1}_{\ud,\Lambda}(\bp-\bq, \bK, \bq) \psi\uppar{n}(\bp -\bq+\bk_j,  \cancel{\bk}_{j},\bq)}{\left(H_0(\bp-\bq; \bK,\bq)+\sigma\right)^2}.
\end{align}
By the same reasoning as for $\Sigma\uppar{1}$, both of these operators are well defined for $\Lambda=\infty$.

Applying the same procedure to $G_{\sigma,\Lambda}^\dagger \Sigma\uppar{1}_{\mathrm{od},\Lambda} G_{\sigma,\Lambda}$, we find

\begin{align}
&\Upsilon_{\ud,\Lambda}(\bp,\bK) = -g^4 \int  \ud \bq \ud \bq' \left( I(\bq, \bq', \bp, \bK, \sigma) - I( \bq, \bq', 0)\right), \label{eq:Yps d}\\
&I(\bq, \bq', \bp, \bK, \sigma)=
\begin{aligned}[t]
 &\frac{|v_\Lambda(q)|^2 |v_\Lambda(q')|^2}{\left(H_0(\bp-\bq;\bK,\bq)+\sigma\right)\left(H_0(\bp-\bq';\bK,\bq')+\sigma\right)}\\
&\times\frac{1}{\left(H_0(\bp-\bq-\bq';\bK,\bq,\bq')+\sigma\right)};
\end{aligned}\notag\\
&\left(\Upsilon_{\mathrm{od},\Lambda}\psi\uppar{n}\right)(\bp,\bK)
 = 
 - g^4\sum_{(j,\ell)\neq (n+1,n+2)}^{\substack{j\leq n+1\\ \ell\leq n+2}}
 \int \ud \bk_{n+1} \ud \bk_{n+2} I_{j,\ell}(\bp,\bk_1,\dots, \bk_{n+2}), \label{eq:Yps od}\\
 &I_{j,\ell}=
 \begin{aligned}[t]
 &\frac{v_\Lambda(k_{n+1})v_\Lambda(k_{n+2}) v_\Lambda(k_{j})v_\Lambda(k_{\ell})}
 {\left(H_0(\bp-\bk_{n+1}; \cancel{\bk}_{n+2})+\sigma\right)
 \left(H_0(\bp-\bk_{n+1}-\bk_{n+2}; \dots)+\sigma\right)
 }\\
 &\times \frac{\psi\uppar{n}(\bp -\bk_{n+1}-\bk_{n+2}+\bk_j+\bk_\ell,  \cancel{\bk}_{j},\cancel{\bk}_{\ell})}{\left(H_0(\bp-\bk_{n+1}-\bk_{n+2}+\bk_j; \cancel{\bk}_{n+1})+\sigma\right)}.
 \end{aligned}
\notag
\end{align}

Setting 
\begin{align}
%
 \Theta\uppar{2}_{\ud,\Lambda}:=&\Xi_{\ud,\Lambda}+\Upsilon_{\ud,\Lambda},\\
  \Theta\uppar{2}_{\mathrm{od},\Lambda}:=&\Xi_{\mathrm{od},\Lambda}+\Upsilon_{\mathrm{od},\Lambda},
\end{align}
we have, for finite or infinite $\Lambda$, 
\begin{align}\label{eq:Theta2}
  \Theta\uppar{2}_{\Lambda}=& \Theta\uppar{2}_{\ud,\Lambda} +  \Theta\uppar{2}_{\mathrm{od},\Lambda}.
  %
\end{align}
The operator $\Theta\uppar{2}_{\ud}$ is a multiplication operator by a function of logarithmic growth in $p, k_1,\dots,k_n$, while  $\Theta\uppar{2}_{\mathrm{od}}$ is an integral operator that is bounded.
In view of~\eqref{eq:Sigma 2 reg} we then set
\begin{equation}
 \Sigma\uppar{2}_{\Lambda}:=\Theta\uppar{2}_\Lambda - G_{\sigma,\Lambda}^\dagger  \Sigma\uppar{1}_\Lambda (H_0+\sigma)^{-1}  \Sigma\uppar{1}_\Lambda \tilde G_{\sigma,\Lambda},
\end{equation}
again omitting the subscript in the case $\Lambda=\infty$.

\begin{lemma}\label{lem:S}
 For any $\eps>0$ there is a constant $C$ and a family $C_\Lambda$ with $\lim_{\Lambda\to \infty} C_\Lambda =0$, such that
 \begin{equation}
  \|\Sigma\uppar{2} \Psi\|\leq g^4 C \|(H_0+\sigma)^\eps \Psi\|,
 \end{equation}
and
\begin{equation}
  \|(\Sigma\uppar{2} -\Sigma\uppar{2}_{\Lambda})\Psi\|\leq C_\Lambda \|(H_0+\sigma)^\eps \Psi\|.
 \end{equation}
\end{lemma}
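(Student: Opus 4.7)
\medskip

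\noindent\textbf{Proof plan for Lemma~\ref{lem:S}.}
The plan is to decompose
\[
 \Sigma^{(2)}_\Lambda = \Theta^{(2)}_{\mathrm d,\Lambda} + \Theta^{(2)}_{\mathrm{od},\Lambda} - G_{\sigma,\Lambda}^\dagger \Sigma^{(1)}_\Lambda (H_0+\sigma)^{-1}\Sigma^{(1)}_\Lambda \tilde G_{\sigma,\Lambda}
\]
and bound each term separately. The last, ``remainder'' term is handled straight from the previous lemmas: by Lemma~\ref{lem:T} the operator $(H_0+\sigma)^{-1/2}\Sigma^{(1)}_\Lambda$ is bounded uniformly in $\Lambda$ (by duality, since $\Sigma^{(1)}_\Lambda(H_0+\sigma)^{-1/2}$ is), so $\Sigma^{(1)}_\Lambda(H_0+\sigma)^{-1}\Sigma^{(1)}_\Lambda$ is a bounded operator. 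Combined with the bounds on $G_{\sigma,\Lambda}$ and $\tilde G_{\sigma,\Lambda}$ from Lemma~\ref{lem:G}, the remainder is uniformly bounded on $\cH$, with the desired norm bound falling out of the product structure; convergence as $\Lambda\to\infty$ follows from the convergence statements in the same two lemmas and the resolvent identity (which controls $\tilde G_{\sigma,\Lambda}-\tilde G_{\sigma}$ via $\Sigma^{(1)}_\Lambda-\Sigma^{(1)}$).

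The substance of the lemma therefore concerns $\Theta^{(2)}_{\mathrm d,\Lambda}$ and $\Theta^{(2)}_{\mathrm{od},\Lambda}$. For the diagonal (multiplier) piece $\Theta^{(2)}_{\mathrm d,\Lambda} = \Xi_{\mathrm d,\Lambda}+\Upsilon_{\mathrm d,\Lambda}$, I would proceed by rescaling in each $\bq$-integral and extracting the large-momentum asymptotics of the subtracted integrand; the tailored subtractions in the definitions of $\Xi_{\mathrm d,\Lambda}$ and $\Upsilon_{\mathrm d,\Lambda}$ are designed precisely to kill the leading $\log\Lambda$ term, leaving a finite multiplier whose growth at large $(p,k_1,\dots,k_n)$ is only logarithmic. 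Using the elementary bound $\log(1+x)\leq C_\eps (1+x)^\eps$ for any $\eps>0$, together with the pointwise inequality $H_0(\bp;\bK)+\sigma\ge c(\bp^2 + \sum_j\omega(k_j)+\sigma)$, one dominates the multiplier by $(H_0+\sigma)^\eps$ uniformly in $n$, yielding the first inequality on the diagonal part.

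For the off-diagonal piece $\Theta^{(2)}_{\mathrm{od},\Lambda}=\Xi_{\mathrm{od},\Lambda}+\Upsilon_{\mathrm{od},\Lambda}$, my plan is to apply the Schur test on each $n$-particle sector, taking weights scaling as powers of $H_0(\bp;\bK)+\sigma$ so that the $\Sigma^{(1)}_{\mathrm d}$ factor in $\Xi_{\mathrm{od},\Lambda}$, whose growth like $\sqrt{H_0+\sigma}$ is controlled, is absorbed into the denominators. The crucial point, as in Lemma~\ref{lem:T}, is the $n$-independence of the Schur constant despite the sums over $j$ (resp.\ over pairs $(j,\ell)$ in $\Upsilon_{\mathrm{od}}$): this follows from the Moser--Seiringer trick used in~\cite{MoSe17} and adapted in \cite[Lem.~17]{La19}, i.e., splitting the integration region so that each summand captures a disjoint contribution from the kinetic energy. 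The double integral in $\Upsilon_{\mathrm{od}}$ is the main new difficulty here and is the step I would be most careful with; one estimates the two $\ud\bk_{n+1}\ud\bk_{n+2}$ integrals by distinguishing the various pairs $(j,\ell)$ and bounding each kernel by a product of one-dimensional kernels that are amenable to Schur.

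Finally, the convergence estimates in the second part of the lemma are obtained by redoing the same arguments with the form factor $v_\infty - v_\Lambda$ in one of the factors at a time, noting that this factor is supported on $\{k>\Lambda\}$. The slight loss $\eps>0$ in the power of $H_0$ is used precisely here: it gives a factor $\Lambda^{-2\eps}$ (or a power of $(1+\Lambda)^{-1}$ after rescaling) from the integrand that produces a null sequence $C_\Lambda$, via dominated convergence applied to the finite integrals that remain after the subtractions. The main obstacle I anticipate is the bookkeeping of the many subtracted integrands in $\Xi_{\mathrm d,\Lambda}$ and $\Upsilon_{\mathrm d,\Lambda}$, ensuring pointwise that the $\Lambda=\infty$ multiplier is well defined and grows only logarithmically; once that is in hand the Schur-type bounds and the convergence proceed along the lines of \cite[Sect.~6]{FiTe2012} and \cite[Lem.~17]{La19}.
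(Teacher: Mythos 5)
Your plan coincides with the paper's own proof: the regular remainder $G_{\sigma,\Lambda}^\dagger\Sigma\uppar{1}_\Lambda(H_0+\sigma)^{-1}\Sigma\uppar{1}_\Lambda\tilde G_{\sigma,\Lambda}$ is bounded (uniformly, and convergent) via Lemmas~\ref{lem:T} and~\ref{lem:G}, the diagonal multiplier $\Theta\uppar{2}_{\ud}$ is controlled by elementary integral estimates exploiting its merely logarithmic growth (absorbed into $(H_0+\sigma)^\eps$), and $\Theta\uppar{2}_{\mathrm{od}}$ is handled by a Schur test whose constant is $n$-independent as in~\cite{MoSe17} and \cite[Lem.19]{La19}, with convergence obtained by repeating the estimates with form factor $v_\infty-v_\Lambda$. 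This is essentially the argument given in the paper, only spelled out in more detail.
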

\begin{proof}
As mentioned above, the regular part $\Sigma\uppar{2}-\Theta\uppar{2}$ defines a bounded operator by Lemma~\ref{lem:T} and Lemma~\ref{lem:G}.
 The bounds of the integral defining $\Theta\uppar{2}_{\ud}$ are elementary. The integral operator $\Theta\uppar{2}_{\mathrm{od}}$ is again bounded using the Schur test. A detailed argument showing that the norms do not grow with the boson-number is given in~\cite[Lem.19]{La19}.
\end{proof}

By the resolvent identity and Lemma~\ref{lem:T}, $\tilde G_{\sigma,\Lambda}$ and $\tilde G_{\sigma,\infty}=\tilde G_{\sigma}$ have essentially the same properties as $G_{\sigma,\Lambda}$.
\begin{lemma}\label{lem:tG}
 The statements of Lemma~\ref{lem:G} hold equally for $\tilde G_{\sigma,\Lambda}$.
\end{lemma}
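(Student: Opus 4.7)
The plan is to treat $\tilde G_{\sigma,\Lambda}$ as a perturbation of $G_{\sigma,\Lambda}$ via the resolvent identity. With $A_\Lambda:=H_0+\Sigma\uppar{1}_\Lambda+\sigma$ and $B:=H_0+\sigma$, the identity $A_\Lambda^{-1}=B^{-1}-A_\Lambda^{-1}\Sigma\uppar{1}_\Lambda B^{-1}$ gives
\begin{equation*}
\tilde G_{\sigma,\Lambda}=G_{\sigma,\Lambda}-A_\Lambda^{-1}\Sigma\uppar{1}_\Lambda G_{\sigma,\Lambda}.
\end{equation*}
The first summand is already controlled by Lemma~\ref{lem:G}, so the problem reduces to bounding the correction uniformly in $\Lambda$, treating the convergence statement via a further iteration, and extracting enough decay in $\sigma$ to obtain invertibility of $1-\tilde G_\sigma$.

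First I would use Lemma~\ref{lem:T} together with $(H_0+\sigma)^{1/2}\leq \tfrac{1}{2}(H_0+\sigma)+\tfrac{1}{2}$ to check that $\Sigma\uppar{1}_\Lambda$ is infinitesimally form-bounded with respect to $H_0$, so that for $\sigma$ large enough (independently of $\Lambda$) one has the two-sided form bound $c_1 B\leq A_\Lambda\leq c_2 B$. By operator monotonicity of $x\mapsto x^{\pm 1}$ on positive operators this yields the factorisation $A_\Lambda^{-1}=B^{-1/2}K_\Lambda B^{-1/2}$ with $\|K_\Lambda\|_{\mathrm{op}}\leq c_1^{-1}$. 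For $s<1/4$, using $H_0^s\leq B^s$ (again monotonicity, since $2s<1$) and the factorisation, one gets
\begin{equation*}
\|H_0^s A_\Lambda^{-1}\Sigma\uppar{1}_\Lambda G_{\sigma,\Lambda}\Psi\|\leq \sigma^{s-1/2}\,c_1^{-1}\,\|B^{-1/2}\Sigma\uppar{1}_\Lambda G_{\sigma,\Lambda}\Psi\|.
\end{equation*}
The decisive step is then to move $B^{-1/2}$ past $\Sigma\uppar{1}_\Lambda$ by duality: since $\Sigma\uppar{1}_\Lambda$ is symmetric, pairing with a test vector $\chi$ and transferring $B^{-1/2}$ to the other side gives $\|B^{-1/2}\Sigma\uppar{1}_\Lambda \phi\|\leq g^2 C\|\phi\|$ directly from Lemma~\ref{lem:T}, and then another application of Lemma~\ref{lem:G} (with $s=0$) yields $\|B^{-1/2}\Sigma\uppar{1}_\Lambda G_{\sigma,\Lambda}\Psi\|\leq C'\|\Psi\|$. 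This establishes the first bound of Lemma~\ref{lem:G} for $\tilde G_{\sigma,\Lambda}$, with a constant uniform in $\Lambda$.

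The convergence statement is obtained by iterating the resolvent identity once more to write $\tilde G_\sigma-\tilde G_{\sigma,\Lambda}$ as a sum of terms each carrying either $G_\sigma-G_{\sigma,\Lambda}$ or $\Sigma\uppar{1}-\Sigma\uppar{1}_\Lambda$ as a factor, and invoking the convergence bounds of Lemmas~\ref{lem:G} and~\ref{lem:T}; the extra $(H_0+\sigma)^\eps$ appearing in the latter is absorbed by arranging $s+\eps<1/4$. For the invertibility of $1-\tilde G_\sigma$, the estimate above at $s=0$ carries the explicit prefactor $\sigma^{-1/2}$, so combined with the $\sigma$-decay of $\|G_\sigma\|_{\mathrm{op}}$ underlying the Neumann series in Lemma~\ref{lem:G}, we get $\|\tilde G_\sigma\|_{\mathrm{op}}<1$ for $\sigma$ sufficiently large, so that $1-\tilde G_\sigma=\sum_{j\geq 0}\tilde G_\sigma^j$. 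The main obstacle to all of this is that $(H_0+\sigma)^{1/2}G_{\sigma,\Lambda}$ is \emph{not} bounded uniformly in $\Lambda$ — it is precisely the UV-divergent quantity responsible for the logarithmic renormalisation — which is why the duality step, relocating $B^{-1/2}$ onto the ``safe'' side of $\Sigma\uppar{1}_\Lambda$, is essential and forms the technical heart of the argument.
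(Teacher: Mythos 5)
Your proposal is correct and is essentially the paper's own argument: the paper disposes of this lemma with the single remark that, by the resolvent identity and Lemma~\ref{lem:T}, $\tilde G_{\sigma,\Lambda}$ has the same properties as $G_{\sigma,\Lambda}$, which is exactly the decomposition $\tilde G_{\sigma,\Lambda}=G_{\sigma,\Lambda}-(H_0+\Sigma\uppar{1}_\Lambda+\sigma)^{-1}\Sigma\uppar{1}_\Lambda G_{\sigma,\Lambda}$ you work out, controlled by Lemmas~\ref{lem:T} and~\ref{lem:G}. Your duality step moving $(H_0+\sigma)^{-1/2}$ onto the symmetric operator $\Sigma\uppar{1}_\Lambda$ supplies the key detail the paper leaves implicit; the only minor imprecision is in the convergence term, where the extra power from Lemma~\ref{lem:T} is best handled the same way, by transferring $(H_0+\sigma)^{-1/2-\eps}$ by duality and interpolating (Heinz--Kato, using $D(H_0+\Sigma\uppar{1}_\Lambda)=D(H_0)$), rather than by the stated condition $s+\eps<1/4$.
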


In view of~\eqref{eq:H_L Sigma 2} we can now define the renormalised Hamiltonian by subtracting the appropriate constants, combining them with the operator~\eqref{eq:H_L Sigma 2} to form $\Sigma\uppar{2}$ and taking $\Lambda=\infty$. Explicitly, for $\sigma$ sufficiently large (as required for $1-\tilde G_{\sigma}$ to be invertible), we define
\begin{align}\label{eq:Hren}
 H_\ren=
 (1- \tilde G_{\sigma}^\dagger)(H_0+\Sigma\uppar{1}+\sigma)(1- \tilde G_{\sigma}) + \Sigma\uppar{2} -\sigma.
\end{align}
Note that $\Sigma\uppar{2}-\Theta\uppar{2}=\mathcal{O}(g^6)$ in operator norm, so for a perturbative treatment one may choose to neglect this, making the formulas completely explicit (see also Section~\ref{sect:pert}).

The domain of definition $D(H_\ren)$ consists of those vectors $\Psi\in \cH$ for which the first term is finite, i.e.,
\begin{equation}\label{eq:Dren}
 D(H_\ren):=\{ \Psi\in \cH | \|H_0(1-\tilde G_\sigma)\Psi\|<\infty \}.
\end{equation}

\begin{proposition}\label{prop:Hren}
 For any $\sigma> 0$ such that $(1-\tilde G_\sigma)$ is invertible, the operator $H_\ren$ with domain $D(H_\ren)$ is self-adjoint and bounded from below.
\end{proposition}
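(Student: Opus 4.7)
The plan is to write $H_\ren + \sigma = A + \Sigma\uppar{2}$ with $A := U^\dagger B U$, where $U := 1 - \tilde G_\sigma$ and $B := H_0 + \Sigma\uppar{1} + \sigma$; establish self-adjointness and strict positivity of $A$ on $D(H_\ren)$; and then absorb $\Sigma\uppar{2}$ via the Kato--Rellich theorem.

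First I would check that $B$ is self-adjoint on $D(H_0)$ and, for $\sigma$ sufficiently large, satisfies $B \geq c > 0$. Lemma~\ref{lem:T} provides $\|\Sigma\uppar{1} \Psi\| \leq g^2 C \|(H_0+\sigma)^{1/2}\Psi\|$, which combined with the elementary inequality $(H_0+\sigma)^{1/2} \leq \delta (H_0+\sigma) + (4\delta)^{-1}$ makes $\Sigma\uppar{1}$ infinitesimally $H_0$-bounded; Kato--Rellich gives self-adjointness on $D(H_0)$ and, after increasing $\sigma$, strict positivity as well as the equivalence of the graph norms of $B$ and $H_0$. Next, using that $U$ is bounded with bounded inverse by Lemma~\ref{lem:tG}, the composition $B^{1/2} U$ is closed and densely defined on its natural domain, so the general fact that $T^\dagger T$ is self-adjoint and non-negative for closed $T$ identifies $A = (B^{1/2} U)^\dagger (B^{1/2} U)$ as self-adjoint on
\[ D(A) = \{\Psi \in \cH : U\Psi \in D(B)\} = \{\Psi \in \cH : H_0(1-\tilde G_\sigma)\Psi \in \cH\} = D(H_\ren), \]
with $A \geq c\, \|U^{-1}\|^{-2} > 0$.

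The key step is then to show that $\Sigma\uppar{2}$ is infinitesimally $A$-bounded. Given $\Psi \in D(A)$, I would write $\Psi = U\Psi + \tilde G_\sigma \Psi$. For any $\epsilon \in (0, 1/4)$, Lemma~\ref{lem:tG} bounds $\|(H_0+\sigma)^\epsilon \tilde G_\sigma \Psi\| \leq C \|\Psi\|$. For the other piece, $U\Psi \in D(H_0)$, and combining $\|(H_0+\sigma)^\epsilon \phi\| \leq \delta \|(H_0+\sigma)\phi\| + C_\delta \|\phi\|$ with the graph-norm equivalence from the first step yields
\[ \|(H_0+\sigma)^\epsilon U\Psi\| \leq \delta'\, \|(U^\dagger)^{-1}\|\, \|A\Psi\| + C'_\delta \|\Psi\|, \]
with $\delta'$ arbitrarily small. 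Lemma~\ref{lem:S} then gives $\|\Sigma\uppar{2} \Psi\| \leq g^4 C_1 \delta' \|A\Psi\| + g^4 C_2 \|\Psi\|$, and Kato--Rellich produces a self-adjoint, bounded below operator $A + \Sigma\uppar{2}$ on $D(A) = D(H_\ren)$; subtracting the constant $\sigma$ leaves both properties intact.

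The hard part will be the estimate in the third step: because the dressing by $U$ pushes $D(A)$ off $D(H_0)$, the fractional power $(H_0+\sigma)^\epsilon \Psi$ is not a priori comparable to $A\Psi$ for $\Psi \in D(A)$. The fractional smoothing $\|H_0^s \tilde G_\sigma \Psi\| \leq C \|\Psi\|$ for $s < 1/4$ from Lemma~\ref{lem:tG} is precisely what closes this gap, and it is compatible with Lemma~\ref{lem:S} only because the latter demands control of $(H_0+\sigma)^\epsilon$ for $\epsilon$ arbitrarily small, which can always be chosen below the regularity threshold $1/4$.
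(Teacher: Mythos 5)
Your proposal is correct and follows essentially the same route as the paper's proof: you treat $(1-\tilde G_\sigma^\dagger)(H_0+\Sigma\uppar{1}+\sigma)(1-\tilde G_\sigma)$ as the self-adjoint, lower-bounded leading term (the paper takes this as immediate from the invertibility of $1-\tilde G_\sigma$, while you justify it via Kato--Rellich for $H_0+\Sigma\uppar{1}$ and the $T^\dagger T$ construction), and then absorb $\Sigma\uppar{2}$ by Kato--Rellich using exactly the paper's splitting $\Psi=(1-\tilde G_\sigma)\Psi+\tilde G_\sigma\Psi$ together with Lemmas~\ref{lem:tG} and~\ref{lem:S}. The only difference is the added detail in the first step, which the paper leaves implicit.
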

\begin{proof}
 Since $(1-\tilde G_\sigma)$ is invertible, the leading term of $H_\ren$ is clearly self-adjoint on $D(H_\ren)=(1-\tilde G_\sigma)^{-1}D(H_0)$ and non-negative. By the Kato-Rellich Theorem, it is thus sufficient to show that $\Sigma\uppar{2}$ is bounded relative to this term, with relative bound zero.
 Let $\Psi\in D(H_\ren)$. Then, using Lemma~\ref{lem:tG} and Lemma~\ref{lem:S}, we have for $\eps,\delta>0$ and some constants $C,C',C_\delta$
 \begin{align}
  \|\Sigma\uppar{2}\Psi\|
  \leq&C \|(H_0+\sigma)^\eps (1-\tilde G_\sigma)\Psi\| + \|\Sigma\uppar{2}\tilde G_\sigma\Psi\| \notag\\
  %
  %
  \leq & \delta \|H_0 (1-\tilde G_\sigma)\Psi\| + C_\delta \|\Psi\| \\
  \leq & C'\delta \|(1-\tilde G_\sigma^\dagger)H_0 (1-\tilde G_\sigma)\Psi\| + C_\delta \|\Psi\|,\notag
 \end{align}
where in the second step we also used Young's inequality. This proves the claim.
\end{proof}

To establish Theorem~\ref{thm:ren} it remains to prove that $H_\Lambda-E_\Lambda$ converges to $H_\ren$.
Note that this also proves that $H_\ren$ is independent of $\sigma>\sigma_0$, even though its definition explicitly uses $\sigma$.

\begin{proposition}\label{prop:conv}
Let $E\uppar{1}_\Lambda$ and $E\uppar{2}_\Lambda$ be given by~\eqref{eq:E_0 pert det}, respectively~\eqref{eq:E(2)} and $E_\Lambda=E\uppar{1}_\Lambda+E\uppar{2}_\Lambda$.
Then for any $\sigma>0$
 \begin{equation*}
  \lim_{\Lambda\to \infty} \left( H_\Lambda - E_\Lambda \pm\ui\sigma\right)^{-1} = \left(H_\ren\pm\ui\sigma\right)^{-1}
 \end{equation*}
 in the norm of operators on $\cH$.
\end{proposition}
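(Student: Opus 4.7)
The plan is to exploit the identity
\[
H_\Lambda - E_\Lambda + \sigma \;=\; A_\Lambda + \Sigma\uppar{2}_\Lambda,
\qquad A_\Lambda := (1-\tilde G_{\sigma,\Lambda}^\dagger)(H_0 + \Sigma\uppar{1}_\Lambda + \sigma)(1-\tilde G_{\sigma,\Lambda}),
\]
obtained by unwinding the manipulations leading to~\eqref{eq:Hren}; its formal $\Lambda=\infty$ counterpart is exactly $H_\ren + \sigma = A_\infty + \Sigma\uppar{2}$. I will establish norm-convergence of the resolvents at $z = \sigma \mp \ui\sigma$ (so that $z-\sigma=\mp \ui\sigma$) for $\sigma$ large. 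Since norm-resolvent convergence at one non-real point is equivalent to convergence at every non-real point (both approximants and limit being self-adjoint), this yields the claim for every $\sigma>0$ and simultaneously shows that the definition of $H_\ren$ is independent of the auxiliary parameter $\sigma>\sigma_0$.

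\emph{Step 1 (leading operator).} Set $T_\Lambda:=1-\tilde G_{\sigma,\Lambda}$ and $K_\Lambda:=H_0+\Sigma\uppar{1}_\Lambda+\sigma$, so that $A_\Lambda=T_\Lambda^\dagger K_\Lambda T_\Lambda$. By Lemma~\ref{lem:tG}, $T_\Lambda^{\pm 1}$ are bounded uniformly in $\Lambda$ and converge in operator norm; by Lemma~\ref{lem:T}, $\Sigma\uppar{1}_\Lambda$ has $H_0$-relative bound zero uniformly in $\Lambda$, so for $\sigma$ large $K_\Lambda\geq c(H_0+\sigma)$ uniformly, whence $A_\Lambda\geq 0$ and $\|(A_\Lambda-z)^{-1}\|\leq 1/\sigma$. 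The factorization
\[
(A_\Lambda-z)^{-1} \;=\; T_\Lambda^{-1}\bigl(K_\Lambda-z(T_\Lambda T_\Lambda^\dagger)^{-1}\bigr)^{-1}(T_\Lambda^\dagger)^{-1}
\]
reduces norm convergence of $(A_\Lambda-z)^{-1}$ to that of the middle factor, whose imaginary part is bounded below by $c\sigma$. A single second-resolvent expansion combined with Lemma~\ref{lem:T} (absorbing the $(H_0+\sigma)^{1/2+\eps}$-weight into $(K_\infty-w)^{-1}$, which maps $\cH$ into $D((H_0+\sigma)^{1/2+\eps})$ by the coercivity of $K_\infty$ and operator monotonicity) then gives norm-convergence.

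\emph{Step 2 (perturbation and Neumann series).} Combining Lemma~\ref{lem:S}, $\|\Sigma\uppar{2}_\Lambda\Psi\|\leq g^4 C\|(H_0+\sigma)^\eps\Psi\|$, with the weighted-resolvent bound $\|(H_0+\sigma)^\eps(A_\Lambda-z)^{-1}\|\leq C'\sigma^{\eps-1}$ (valid for $\eps<1/4$ uniformly in $\Lambda$, obtained by splitting $(A_\Lambda-z)^{-1}\Phi=T_\Lambda\Psi+\tilde G_{\sigma,\Lambda}\Psi$ and estimating the first summand via L\"owner--Heinz applied to $K_\Lambda\geq c(H_0+\sigma)$ and the identity $K_\Lambda T_\Lambda\Psi=(T_\Lambda^\dagger)^{-1}(\Phi+z\Psi)$, and the second via the $H_0^{1/4-\delta}$-bound of Lemma~\ref{lem:tG}) gives $\|\Sigma\uppar{2}_\Lambda(A_\Lambda-z)^{-1}\|\leq g^4 C''\sigma^{\eps-1}\leq \tfrac{1}{2}$ for $\sigma$ large, uniformly in $\Lambda$. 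The Neumann series
\[
(A_\Lambda-z+\Sigma\uppar{2}_\Lambda)^{-1} \;=\; (A_\Lambda-z)^{-1}\sum_{j\geq 0}\bigl(-\Sigma\uppar{2}_\Lambda(A_\Lambda-z)^{-1}\bigr)^{j}
\]
therefore converges absolutely in operator norm, uniformly in $\Lambda$. Each term converges in norm as $\Lambda\to\infty$ by Step~1 together with the convergence statement in Lemma~\ref{lem:S} for $\Sigma\uppar{2}_\Lambda-\Sigma\uppar{2}$, so the sum converges to $(A_\infty-z+\Sigma\uppar{2})^{-1}=(H_\ren-z+\sigma)^{-1}$.

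\emph{Main obstacle.} The most delicate point is Step~1: proving norm convergence of $(A_\Lambda-z)^{-1}$ despite $A_\Lambda-A_\infty$ being a difference of two unbounded operators with different domains. The factorization $A_\Lambda=T_\Lambda^\dagger K_\Lambda T_\Lambda$ isolates the bounded, norm-convergent $T_\Lambda$ from the singular $K_\Lambda$, but one must verify carefully that the $(H_0+\sigma)^{1/2+\eps}$-weight in the difference bound of Lemma~\ref{lem:T} can in fact be absorbed into $(K_\infty-w)^{-1}$. Underpinning everything, all the constants appearing in Lemmas~\ref{lem:T},~\ref{lem:tG},~\ref{lem:S} must be independent of the boson number $n$; this $n$-uniformity is secured by Schur-test arguments as in~\cite[Lem.\,17 and 19]{La19} and is the essential new ingredient beyond the corresponding arguments for the less singular models of~\cite{nelson1964, GrWu18, LaSch19}.
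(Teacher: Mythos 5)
Your proposal is correct in substance and rests on the same two pillars as the paper's own argument -- the identity $H_\Lambda-E_\Lambda+\sigma=(1-\tilde G_{\sigma,\Lambda}^\dagger)(H_0+\Sigma\uppar{1}_\Lambda+\sigma)(1-\tilde G_{\sigma,\Lambda})+\Sigma\uppar{2}_\Lambda$ coming from~\eqref{eq:H_L Sigma 2}, and the boundedness/convergence statements of Lemmas~\ref{lem:T},~\ref{lem:tG},~\ref{lem:S} -- but you arrange the limit differently. The paper applies the second resolvent identity once to the full operators $H_\ren$ and $\tilde H_\Lambda=H_\Lambda-E_\Lambda$, writes the difference as four terms ($\Sigma\uppar{1}_\Lambda-\Sigma\uppar{1}$ sandwiched between $1-\tilde G_\sigma$ factors, the two $\tilde G_{\sigma,\Lambda}-\tilde G_\sigma$ terms, and $\Sigma\uppar{2}_\Lambda-\Sigma\uppar{2}$), and kills each one using the weighted bounds $(H_\ren\pm\ui\sigma)^{-1}(1-\tilde G_\sigma^\dagger)H_0$, $H_0(1-\tilde G_{\sigma,\Lambda})(\tilde H_\Lambda\pm\ui\sigma)^{-1}$ and $(H_\ren\pm\ui\sigma)^{-1}(H_0+1)^\eps$, uniformly in $\Lambda$ (these come from the proof of Proposition~\ref{prop:Hren}); this works at the given $\sigma$, with no Neumann series and no largeness condition beyond invertibility of $1-\tilde G_\sigma$. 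You instead first prove norm-resolvent convergence of the leading part $A_\Lambda=T_\Lambda^\dagger K_\Lambda T_\Lambda$ via the factorisation $(A_\Lambda-z)^{-1}=T_\Lambda^{-1}(K_\Lambda-zB_\Lambda)^{-1}(T_\Lambda^\dagger)^{-1}$ with $B_\Lambda=(T_\Lambda T_\Lambda^\dagger)^{-1}$, then add $\Sigma\uppar{2}_\Lambda$ by a Neumann series that converges only for $\sigma$ large, and finally transfer from $z=\sigma\mp\ui\sigma$ to all non-real points. This is legitimate and nicely modular, at the price of the extra transfer step and of implicitly requiring the constants of the Lemmas to stay under control as $\sigma$ grows.

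Two points need repair, though neither threatens the strategy. First, in Step~1, ``operator monotonicity'' cannot absorb the weight $(H_0+\sigma)^{1/2+\eps}$: L\"owner--Heinz applied to $K\geq c(H_0+\sigma)$ only reaches the exponent $1/2$. The correct justification is that $\Sigma\uppar{1}_\Lambda$ is infinitesimally $H_0$-bounded uniformly in $\Lambda$ (Lemma~\ref{lem:T} plus Young), hence $H_0(K_\Lambda-zB_\Lambda)^{-1}$ is bounded uniformly in $\Lambda$, and therefore so is $(H_0+\sigma)^{1/2+\eps}(K_\Lambda-zB_\Lambda)^{-1}$ since $1/2+\eps<1$; note also that Lemma~\ref{lem:T} places the weight on the vector, so it must be absorbed into the $\Lambda$-dependent resolvent (or one passes to adjoints), not into $(K_\infty-w)^{-1}$ as written. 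Second, term-by-term convergence of the Neumann series needs $\Sigma\uppar{2}(A_\Lambda-z)^{-1}\to\Sigma\uppar{2}(A_\infty-z)^{-1}$, i.e.\ the weighted convergence $(H_0+\sigma)^\eps\left[(A_\Lambda-z)^{-1}-(A_\infty-z)^{-1}\right]\to0$, which Step~1 does not literally provide; it follows by interpolating the unweighted convergence against the uniform weighted bound with a slightly larger exponent $\eps<\eps'<1/4$. The paper sidesteps both issues by placing the weights on the fixed limit resolvent $(H_\ren\pm\ui\sigma)^{-1}$ and on $(\tilde H_\Lambda\pm\ui\sigma)^{-1}$ only, so that only uniform boundedness, never weighted convergence, is needed.
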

\begin{proof}
Denote $\tilde H_\Lambda:=H_\Lambda-E\uppar{1}_\Lambda-E\uppar{2}_\Lambda$.
 By the identity~\eqref{eq:H_L Sigma 2} and the definition of $\Sigma\uppar{2}_\Lambda$ we have
 \begin{align}
 \tilde H_\Lambda=
 (1- \tilde G_{\sigma,\Lambda}^\dagger)(H_0+\Sigma\uppar{1}_\Lambda+\sigma)(1- \tilde G_{\sigma,\Lambda}) + \Sigma\uppar{2}_\Lambda -\sigma.
\end{align}
The difference of resolvents is then 
\begin{align}
 &(H_\ren\pm\ui\sigma)^{-1} - (\tilde H_\Lambda\pm\ui\sigma)^{-1} \\
 &= 
 \begin{aligned}[t]
     &(H_\ren\pm\ui\sigma)^{-1} 
     \bigg( (1- \tilde G_{\sigma}^\dagger)\left(\Sigma\uppar{1}_\Lambda-\Sigma\uppar{1}\right) (1- \tilde G_{\sigma}) \\
     & + \left((1- \tilde G_{\sigma}^\dagger)\left(H_0+\Sigma\uppar{1}_\Lambda\right)(\tilde G_{\sigma}- \tilde G_{\sigma,\Lambda})\right)\\
     &+\left((\tilde G_{\sigma}^\dagger- \tilde G_{\sigma,\Lambda}^\dagger)\left(H_0+\Sigma\uppar{1}_\Lambda\right)(1- \tilde G_{\sigma,\Lambda})\right) \\
     &+\left(\Sigma\uppar{2}_\Lambda-\Sigma\uppar{2}\right) 
     \bigg)(\tilde H_\Lambda\pm\ui\sigma)^{-1}.
\end{aligned}\notag
\end{align}
By the argument of Proposition~\ref{prop:Hren}, $(H_\ren\pm\ui\sigma)^{-1} (1- \tilde G_{\sigma}^\dagger)H_0$ and 
$H_0(1- \tilde G_{\sigma,\Lambda})(\tilde H_\Lambda\pm\ui\sigma)^{-1}$ define bounded operators, uniformly in $\Lambda$. By Lemma~\ref{lem:T}, $\Sigma\uppar{1}_\Lambda$, the first term then tends to zero in norm.
Since $\tilde G_{\sigma,\Lambda}\to \tilde G_{\sigma}$, also the second and third term converge to zero. From Proposition~\ref{prop:Hren} and Lemma~\ref{lem:tG} we see that $(H_\ren\pm\ui\sigma)^{-1}(H_0+1)^\eps$ defines a bounded operator for some $\eps>0$. Since $(H_0+\sigma)^{-\eps}\left(\Sigma\uppar{2}_\Lambda-\Sigma\uppar{2}\right)$ converges to zero in norm, by Lemma~\ref{lem:S}, this proves the claim.
\end{proof}

\subsection{Formal Perturbation Theory}\label{sect:pert}

Our technique of extracting the divergent contributions to the Hamiltonian can be regarded as a form of perturbative expansion. 
This expansion can be carried out to higher orders. In our case, these require no further renormalisation, but for more general models they lead to an iterative renomralisation procedure. This expansion could have practical importance for numerical simulations or the construction of trial states. Here, we make more explicit the general form of this expansion.
However, we remark that this expansion does not directly imply rigorous expansions of e.g. ground states of 
$H_\ren(\bP)$, since the model has no mass gap. 

Starting from the definition of $\Sigma\uppar{j}$, $j=1,2$, we can continue by recursively setting
\begin{align}
 T\uppar{j}&= H_0 + \sum_{i=1}^{j} \Sigma\uppar{i}  \\
 G\uppar{j}_\sigma&=  -g \left(T\uppar{j} + \sigma\right)^{-1} a^\dagger(v_\bx),
\end{align}
and then, using that $T\uppar{j}=T\uppar{j-1} + \Sigma\uppar{j}$ and the definition of $\Sigma\uppar{j}$,
\begin{align}
  &\Sigma\uppar{j+1}=-g^2 a(v_\bx)\left(T\uppar{j} + \sigma\right)^{-1}a^\dagger(v_\bx) - T\uppar{j}\notag+ H_0\\
  &= g^2 a(v_\bx) \left(T\uppar{j} + \sigma\right)^{-1} \Sigma\uppar{j} \left(T\uppar{j-1} + \sigma\right)^{-1}a^\dagger(v_\bx).
\end{align}
By the same calculation as in~\eqref{eq:H_L Sigma},~\eqref{eq:H_L Sigma 2}, we then have the identity,
\begin{equation}
 H_\ren = (1-G\uppar{j}_\sigma)^\dagger (T\uppar{j} + \sigma)(1-G\uppar{j}_\sigma)
 + \Sigma\uppar{j+1}-\sigma,
\end{equation}
 where we have already taken $\Lambda=\infty$ and $\sigma$ sufficiently large.
Note that $\Sigma\uppar{j}$ carries a power of $g^{2j}$. Additionally, by our bounds on $\Sigma\uppar{2}$, all $\Sigma\uppar{j}$ for $j>2$ are bounded operators on $\cH$, so we really have $\Sigma\uppar{j}=\mathcal{O}(g^{2j})$.
This formula clearly suggests to study $H_\ren - \Sigma\uppar{j+1}+ \sigma$, or its restriction to fixed momentum $\bP$, which is isospectral to $T\uppar{j}(\bP) + \sigma$.
These operators may have bound states with non-zero boson-number with molecule-like properties.

\subsection{The Model with Multiple Impurities}\label{sect:mult}

We now outline how the Hamiltonian for multiple impurities can be treated by the method presented above.
We treat the impurities as distinguishable particles, so that both fermions and bosons can accommodated by restricting to wavefunctions with the correct symmetry.

For a fixed number $N_\mathrm{I}$ of impurities, the Hamiltonian with cutoff is given by
\begin{equation}
 H_\Lambda:=H_0 + \sum_{j=1}^{N_\mathrm{I}} g \Phi_\Lambda(\bx_j),
\end{equation}
with
\begin{equation}
 H_0(\bp_1.\dots, \bp_{N_\mathrm{I}}; \bK)= \sum_{j=1}^{N_\mathrm{I}} \frac{\bp_j^2}{2M} + \ud \Gamma(\omega(\bk)).
\end{equation}
We now follow the procedure of Section~\ref{sect:Hren} and indicate the differences for the many-impurity case.
In general, now every creation or annihilation operator is associated with one of the impurities, at which the creation or annihilation takes place. 

To start with, we consider the generalisations of $\Sigma\uppar{1}$ and $E_\Lambda\uppar{1}$.
In Eq.~\eqref{eq:Sigma^1 Lambda} there is now an additional double sum over the impurities, with which both instances of $v_\Lambda$ are associated. When these impurities coincide, we obtain a contribution analogous to $\Sigma\uppar{1}+E_\Lambda\uppar{1}$. Terms corresponding to interaction of two different impurities give rise to additional integral operators in $\Sigma\uppar{1}$. These can be dealt with in the same way as before, see~\cite[Lem.7]{La19}.
Consequently, the numbers $E_\Lambda\uppar{1}$ should be replaced by $N_\mathrm{I} E_\Lambda\uppar{1}$. 
The interpretation of this is of course that the renormalisation adjusts the rest-energy of the impurities.

The next step is to consider $E_\Lambda\uppar{2}$ and $\Sigma\uppar{2}$. For this, it is necessary to understand the divergent terms in 
\begin{equation}
 G_{\sigma,\Lambda}^\dagger \Sigma\uppar{1}_\Lambda G_{\sigma,\Lambda}.
\end{equation}
As in the previous reasoning, they are characterised by the fact that they lead to multiplication operators, as opposed to integral operators. Note, however, that this expression now contains four sums over the impurities. Divergent terms arise whenever each of the indices coincides with another one, so one might expect that the correct choice of $E_\Lambda\uppar{2}$ behaves like $N_\mathrm{I}^2$.
This is not the case, due to cancellations between contributions coming from $\Sigma\uppar{1}_\ud$ and $\Sigma\uppar{1}_\mathrm{od}$, and the correct replacement for $E_\Lambda\uppar{2}$ is exactly $ N_\mathrm{I} E_\Lambda\uppar{2}$. 

This can be seen from the following calculations.
The relevant terms in $G_{\sigma,\Lambda}^\dagger \Sigma\uppar{1}_{\ud, \Lambda} G_{\sigma,\Lambda}$ are given by the evaluation at $\bp_1=\cdots=\bp_{N_\mathrm{I}}=0$ of 
\begin{align}
&-g^4\int \ud \bq' \frac{|v_\Lambda(q')|^2}{H_0(\bp_j - \bq',\dots; \bK, \bq') ^2} \\ 
 & \times\hspace{-3pt} \int\hspace{-3pt} \ud \bq  \left(\frac{|v_\Lambda(q)|^2}{H_0(\bp_i - \bq, \bp_j-\bq', \dots; \bK, \bq, \bq')} - \frac{|v_\Lambda(q)|^2}{H_0(\bq;\bq)}\right).\notag
\end{align}
The number of terms with $i=j$ is of course $N_\mathrm{I}$, and these give rise to exactly the same divergence as for $N_\mathrm{I}=1$.
The $N_\mathrm{I}(N_\mathrm{I}-1)$ contributions with $i\neq j$ each give
\begin{align}
&-g^4\hspace{-4pt}\int \hspace{-4pt}\ud \bq' \frac{|v_\Lambda(q')|^2}{H_0(\bq'
;\bq')^2}
 \hspace{-2pt}\int\hspace{-4pt}\ud \bq  \left(\frac{|v_\Lambda(q)|^2}{H_0(\bq,\bq';\bq,\bq')} - \frac{|v_\Lambda(q)|^2}{H_0(\bq;\bq)}\right) \notag\\
 &=g^4\hspace{-4pt}\int\hspace{-3pt} \ud \bq' \int\hspace{-3pt}\ud \bq \frac{|v_\Lambda(q')|^2|v_\Lambda(q)|^2 }{H_0(\bq';\bq')H_0(\bq,\bq';\bq,\bq')H_0(\bq;\bq) }.\label{eq:N_I cancel}
\end{align}
By inspection of the formulas~\eqref{eq:Yps d},~\eqref{eq:Yps od}, wee see that $G_{\sigma,\Lambda}^\dagger \Sigma\uppar{1}_{\mathrm{od}, \Lambda} G_{\sigma,\Lambda}$ gives exactly the same contribution as~\eqref{eq:N_I cancel}, but with the opposite sign, from the terms where the boson-indices $(j,\ell)=(n+1,n+2)$ are associated with different impurities.

\subsection{Interior-Boundary Conditions}\label{sect:IBC}

Here we explain briefly how the Bogoliubov-Fröhlich Hamiltonian, and in particular the condition~\eqref{eq:Dren} characterising its domain, can be understood in terms of interior boundary conditions.
More details on this approach can be found in~\cite{TeTu15, TeTu16}. It was used by the author in~\cite{La19} to study a three-dimensional model of Fröhlich type with point interactions, providing key insights for our treatment of the Bogoliubov-Fröhlich Hamiltonian.

A core idea of interior boundary conditions is that the configuration space of multiple particles should not contain configurations with more than one particle at the same location. It thus has a boundary, given by the configurations with $\bx=\by_i$, for some $i$ (or $\by_i=\by_j$ for $i\neq j$, but these play no role here since the bosons do not interact directly).
On this boundary one should then choose boundary conditions that correctly implement the physical model. For particles interacting via contact pseudo-potentials, with conserved particle number, these are of Skornyakov--Ter-Matirosyan type~\cite{skornyakov1956} (although these conditions are not sufficient in general,~\cite{FaMi1962}), or the Bethe-Peierls for the interaction with static particles.
In a model where the particle number is not conserved, the total boundary consists of the collision configurations of any number of particles. The interior boundary condition will then relate (generalised) boundary values of the wavefunctions with different particle numbers, thus giving rise to creation/annihilation of particles. 

Let us investigate the behaviour on the boundary for our model in more detail. 
This carries some important information, as it gives a criterion for a wavefunction to be an element of $D(H_\ren)$ that can be verified directly. In particular, any eigenfunctions of $H_\ren$ or the fixed-momentum operators $H_\ren(\bP)$ must satisfy these conditions (in the latter case with $\bx=0$).
Expressed on the sector with $n+1$ bosons, the condition~\eqref{eq:Dren} for $\Psi\in\cH$ to be in $D(H_\ren)$ implies
\begin{equation}
 \psi\uppar{n+1}- \tilde G_\sigma \psi\uppar{n} \in D(H_0)\cap \mathcal{H}\uppar{n+1}.
\end{equation}
In the position representation, every element of $D(H_0)\cap \mathcal{H}\uppar{n+1}$ 
can be evaluated on the plane $\{\by_j=\bx\}$ for any $j=1,\dots,n+1$ (by~\cite[Thm.IX.38]{ReSi2} -- this does not mean that the function is continuous), giving an element of $\cH\uppar{n}$.
In this sense, the condition means that $\psi\uppar{n+1} - \tilde G_\sigma \psi\uppar{n} $ is regular, and
thus $\psi\uppar{n}$ and $\tilde G_\sigma \psi \uppar{n}$ have the same singularities.
Spelling out $\tilde G_\sigma \psi \uppar{n}$, we find
\begin{align}
\frac{-g}{\sqrt{n+1}} \sum_{j=1}^{n+1}(H_0+\Sigma\uppar{1}+\sigma)^{-1} \check v(\bx-\by_j)\psi\uppar{n}(\bx,\cancel{\by}_j),
\end{align}
where $\check v$ is the inverse Fourier transform of $v_\infty(k)$. One easily checks, by comparing with the case $\check v= c \delta$, that the singularities of this expression are located on the planes where $\bx=\by_j$. Furthermore, the $j$-th term diverges only on this plane, but is regular at $\bx=\by_i\neq \by_j$, so the singularities at $\bx=\by_j$ are essentially the same as for $n=0$ and $\bx=\by$.
In that case, we have
\begin{align}
 &\tilde G_\sigma \psi \uppar{0}
 = - g \Big( (H_0+\sigma)^{-1} \check{v}(\bx-\by) \psi\uppar{0}(\bx)  \\
 &- (H_0+\sigma)^{-1}\Sigma\uppar{1}(H_0+\Sigma\uppar{1}+\sigma)^{-1} \check{v}(\bx-\by) \psi\uppar{0}(\bx)
 \Big).\notag
\end{align}
The term in the first line clearly has a divergence proportional to $|\bx-\by|^{-1}$ as $\by\to\bx$.
More careful analysis reveals that the second term has a divergence proportional to $\log|\bx-\by|$.
The function $\psi\uppar{n+1}$ thus has the asymptotic behavior
\begin{align}
 \psi\uppar{n+1}(\bx,\mathbf{Y})
  \sim \frac{-g}{\sqrt{n+1}} \left( c_{-1}|\bx-\by_j|^{-1})
 + c_0 \log|\bx-\by_j| \right)\psi\uppar{n}(\bx,\cancel{\by}_j), \label{eq:psi sing}
\end{align}
as $|\bx-\by_j|\to 0$, for all points with $\bx\neq \by_i$, $i\neq j$ and some constants $c_{-1}$, $c_0$ (their precise values  can be obtained from~\cite[Eq.(56)]{TeTu15} by choosing the parameter $m_y=\sqrt2 c \xi$).

In this context, the renormalisation procedure can be interpreted as the extension of the annihilation operator
$a(v_\bx)$ to functions with singularities as in~\eqref{eq:psi sing}. This extension is obtained by neglecting 
divergent contributions due to the ``evaluation'' of singular functions at $\by_{j}=\bx$ (cf.~\cite[Sect.3.2]{LaSch19} 
and~\cite[Eq.(29)]{La19}).
Let $A$ denote such an extension. Then we can express the action of the Hamiltonian $H_\ren$ on an element $\Psi\in D(H_\ren)$ of its domain as (cf.~\cite[Eq.(10)]{La19})
\begin{align}
 H_\ren\Psi= H_0 \Psi + a^\dagger(v_\bx)\Psi + A\Psi + E_0\Psi,
\end{align}
where $E_0$ is a constant that can be set to zero by changing the definition of $A$.
Note that in this equation every one of the first three summands should be interpreted as a distribution, and only their sum defines an element of $\cH$. This means that the domain of $H_\ren$ has been chosen specifically in order to make their singularities, which are located on the collision configurations, cancel each other out.

\section{Conclusions}

We have constructed the renormalised Bogoliubov-Fröhlich Hamiltonian by an explicit procedure. In doing so, we derived an exact expression for the constant of proportionality  of the $\log(\Lambda)$-energy-shift first observed in~\cite{grusdt2015}. The algebra underlying our method provides a way of expanding the Hamiltonian in powers of $g$ that is compatible with renormalisation.
We have also explained the relation to  the approach of interior boundary conditions, by which our method is inspired.
Our presentation of the Hamiltonian provides new tools for both analytical and numerical approaches to the BEC-polaron system. 

\bigskip
\noindent\textbf{Acknowledgment}
The author thanks Julian Schmidt and Fabian Grusdt for helpful discussions.

\bigskip
\noindent\textbf{Data Availability Statement}
Data sharing is not applicable to this article as no new data were created or analyzed in this study.

\end{document}